\documentclass[11pt]{amsart}
\usepackage[english]{babel}
\usepackage{amssymb,amsfonts,amsmath}
\usepackage{amscd}
\usepackage{setspace}
\usepackage{longtable}
\usepackage[noadjust]{cite} 
\usepackage{enumerate}
\usepackage{float}
\usepackage{euscript}
\usepackage[unicode]{hyperref}
\usepackage{graphicx}
\usepackage{ragged2e}
\usepackage{mathtools}
\usepackage{xfrac}
\usepackage{lipsum}
\usepackage[labelfont=small,labelformat=simple]{subcaption}
\usepackage{tikz-cd}

\graphicspath{{figures/}}
\usepackage{graphics}
%

\theoremstyle{plain}
\newtheorem{theorem}{Theorem}[section]

\newtheorem{lemma}[theorem]{Lemma}

\theoremstyle{definition}
\newtheorem{definition}[theorem]{Definition}
\newtheorem{remark}[theorem]{Remark}
\newtheorem{example}[theorem]{Example}

\begin{document}
\title[Parallel transport and monodromy]{Parallel transport along Seifert manifolds and fractional monodromy}
\author{N. Martynchuk}
\email{N.Martynchuk@rug.nl}
\author{K. Efstathiou}
\email{K.Efstathiou@rug.nl}

\address{Johann Bernoulli Institute for Mathematics and Computer Science\\
  University of Groningen\\ P.O. Box 407\\ 9700 AK Groningen\\ The Netherlands}

\begin{abstract}
The notion of fractional monodromy was 
introduced by Nekhoroshev, Sadovski\'{i} and Zhilinski\'{i} as a generalization of standard (`integer') monodromy in the sense of Duistermaat from torus bundles to singular 
torus fibrations. In the present paper we prove a general result that allows to compute fractional monodromy in various integrable Hamiltonian systems.
In particular, we show
that the non-triviality of fractional monodromy in $2$ degrees of freedom systems with a Hamiltonian circle action
is related only to the fixed points of the circle action. Our approach is based on the study of a specific notion of 
parallel transport along Seifert manifolds. \thanks{\ The final publication is available at Springer via \href{https://doi.org/10.1007/s00220-017-2988-5}{https://doi.org/10.1007/s00220-017-2988-5}.}
\end{abstract}

\maketitle

\newtheorem{fixme}{FIXME}

\newcommand{\setS}{\mathbb{S}}
\newcommand{\setR}{\mathbb{R}}
\newcommand{\setC}{\mathbb{C}}
\newcommand{\setT}{\mathbb{T}}
\newcommand{\setN}{\mathbb{N}}
\newcommand{\setZ}{\mathbb{Z}}
\newcommand{\SL}{\mathrm{SL}}

\newcommand{\moma}[1]{\ensuremath\begin{pmatrix}1&#1\\0&1\end{pmatrix}}

\newcommand*{\diagmatrix}[1]{
        \begin{matrix}  
                #1        & \cdots & 0\\
                \vdots & \ddots & \vdots\\
                0        & \cdots & #1
        \end{matrix}
}

\section{Introduction} \label{intro}

A fundamental notion in classical mechanics is the notion of \textit{Liouville integrability}. A Hamiltonian system 
\begin{equation*}
\dot{x} = X_H, \ \ \omega(X_H, \cdot) = - dH,
\end{equation*}
on the $2n$-dimensional symplectic manifold $(M, \omega)$ is called Liouville integrable if there exist almost everywhere independent functions
$F_1 = H, \ldots, F_n$ such that all \textit{Poisson brackets} vanish:
$$
\{F_i,F_j\} = \omega(X_{F_i}, X_{F_j}) = 0.
$$
Various Hamiltonian systems, such as the Kepler and two-centers problem, the problem of $n \le 3$ point vortices, Euler, Lagrange and Kovalevskaya tops, 
are integrable in this sense.

The topological significance of Liouville integrability reveals itself in the
Arnol'd-Liouville theorem \cite{Arnold1968}. Assume that  the \textit{integral map}
$$
F = (F_1, \ldots, F_n) \colon M \to \mathbb R^n
$$
is proper.
The theorem states that a tubular neighborhood of a connected regular fiber $F^{-1}(\xi_0)$ 
is
a trivial torus bundle $D^n \times T^n$ admitting (semi-local) \textit{action-angle coordinates} 
$$I \in D^n \ \mbox{ and } \ \varphi \mod 2\pi \in T^n, \ \ \omega = dI \wedge d\varphi.$$ 
In particular, $F$ is a singular torus fibration and on each torus  $\{\xi\} \times T^n$ the motion is quasi-periodic. 

The question whether and when the action-angle coordinates exist globally  was answered in \cite{Nekhoroshev1972, Duistermaat1980}. It turns out \cite{Nekhoroshev1972} that  global action-angle
coordinates exist if
the set $R \subset image(F)$ of regular values of $F$ is such that
$$
\pi_1(R, \xi_0) = 0 \ \mbox{ and} \ H^2(R, \mathbb R) = 0.
$$
Obstructions that give necessary and sufficient conditions for the existence of  global action-angle
coordinates were given by Duistermaat; see \cite{Duistermaat1980, Lukina2008}. One such obstruction
is called (\textit{standard}) \textit{monodromy}. It 
appears only if $\pi_1(R,\xi_0) \ne 0$ and entails the non-existence of global action coordinates. 

Since the work of Duistermaat,
standard monodromy has been observed in many integrable systems of classical mechanics as well as in integrable 
approximations to molecular and atomic systems. In the typical case of $n = 2$ degrees of freedom non-trivial monodromy is manifested by the presence of the 
so-called \textit{focus-focus} points of the integral fibration $F$ \cite{Lerman1994, Matveev1996, Zung1997}. 
Such a result is often referred to as \textit{geometric monodromy theorem}. It has been recently
observed in \cite{Efstathiou2017} that the geometric monodromy theorem is a consequence of the following topological result.

\begin{theorem} \label{theorem/fth} \textup{(\cite[\S 4.3.2]{Bolsinov2004}, \cite{Efstathiou2017})}
Assume that $n = 2$ and that $F \colon F^{-1}(R) \to R$ is invariant under a free fiber-preserving $\mathbb S^1$ action. 
For  a simple closed curve $\gamma \subset R$ set $E = F^{-1}(\gamma)$ and $B = E / \mathbb S^1$.  Then
 the
monodromy of the $2$-torus bundle $F \colon E \to \gamma$ is given by
$$
\begin{pmatrix}
 1 & \langle {\bf e}, B \rangle \\
 0 & 1
\end{pmatrix} \in \mathrm{SL}(2,\mathbb Z),
$$
where ${\bf e}$ is the Euler class of the principal circle bundle $\rho \colon E \to B$.
\end{theorem}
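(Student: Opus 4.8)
The plan is to use the two structures carried by $E$ simultaneously: the torus bundle $F\colon E\to\gamma$ and the principal circle bundle $\rho\colon E\to B$, together with the fact that $B$ is itself a circle bundle over $\gamma$. Since $\gamma$ consists of regular values and $F$ is proper, $F\colon E\to\gamma$ is a $T^2$-bundle by the Arnol'd--Liouville theorem (or Ehresmann's lemma); freeness of the $\mathbb S^1$-action makes $\rho\colon E\to B$ a principal circle bundle, and $B = E/\mathbb S^1$ is a closed surface fibered over $\gamma\cong\mathbb S^1$ with circle fibers. As $\gamma$ is two-sided in the value plane $\mathbb R^2$, its preimage $E = F^{-1}(\gamma)$ is a two-sided hypersurface in the orientable manifold $M$ and hence orientable; since the vertical bundle of $\rho$ is trivialized by the infinitesimal generator of the action, $B$ is orientable as well. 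Therefore $B\cong T^2$ and the bundle $B\to\gamma$ is trivial; fix a trivialization $B = \gamma\times\mathbb S^1_c$. Over a fixed fiber $T = F^{-1}(\xi_0)$, let $\alpha\in H_1(T;\mathbb Z)$ be the class of an $\mathbb S^1$-orbit and $c\in H_1(T;\mathbb Z)$ a class with $\rho_* c$ a generator of $H_1(\mathbb S^1_c)$; then $(\alpha,c)$ is a basis of $H_1(T;\mathbb Z)$.

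Next I would compute the monodromy by a clutching argument. Cut $\gamma$ at $\xi_0$; over the resulting interval $B$ restricts to the cylinder $C = [0,1]\times\mathbb S^1_c$, and every principal circle bundle over $C$ is trivial, so $E|_C\cong C\times\mathbb S^1_\alpha = [0,1]\times T$ with $T = \mathbb S^1_c\times\mathbb S^1_\alpha$. Then $E$ is recovered by identifying the two ends of $[0,1]\times T$ through a bundle automorphism of $T$ over $B$, i.e.\ an $\mathbb S^1_\alpha$-equivariant map covering $\mathrm{id}_{\mathbb S^1_c}$; every such map has the form $(c,a)\mapsto(c, a+g(c))$ for some $g\colon\mathbb S^1_c\to\mathbb S^1_\alpha\cong U(1)$. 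This single map is at once (i) the monodromy $\phi$ of $F\colon E\to\gamma$ and (ii) the clutching function presenting $\rho\colon E\to B$ over $T^2 = \gamma\times\mathbb S^1_c$ out of the trivial bundle over $C$.

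Reading off both: on $H_1(T;\mathbb Z)$ the map $\phi$ fixes $\alpha$ and sends $c$ to the graph of $g$, homologous to $c+(\deg g)\alpha$, so in the basis $(\alpha,c)$ the monodromy is $\begin{pmatrix}1 & \deg g\\ 0 & 1\end{pmatrix}$ --- in particular it automatically lies in $\mathrm{SL}(2,\mathbb Z)$. On the other hand, the Euler number of a circle bundle over $T^2$ obtained by clutching the trivial bundle over a cylinder along $g$ equals $\deg g$, so $\deg g = \langle\mathbf e, B\rangle$ and the theorem follows. I expect the only substantive point to be this last identification of $\deg g$ with $\langle\mathbf e, B\rangle$, and with it the need to fix orientation conventions so the sign comes out as $+\langle\mathbf e,B\rangle$ and not $-\langle\mathbf e,B\rangle$ (the two are conjugate in $\mathrm{SL}(2,\mathbb Z)$ and also correspond to reversing the orientation of $\gamma$ or of the fiber, so this is bookkeeping rather than real content). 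An equivalent, more connection-theoretic route --- closer to the parallel-transport viewpoint announced in the introduction --- is to equip $\rho$ with a principal connection of curvature $\Omega$: parallel-transporting a section of $\rho$ over $\{\xi_0\}\times\mathbb S^1_c$ once around $\gamma$ sweeps out all of $B$ and returns a section differing from the original by a map $\mathbb S^1_c\to\mathbb S^1_\alpha$ of degree $\tfrac{1}{2\pi}\int_B\Omega = \langle\mathbf e, B\rangle$, which gives the off-diagonal entry directly.
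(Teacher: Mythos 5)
Your argument is correct, and the identification at its core --- that the mapping-torus gluing map of the torus bundle $F\colon E\to\gamma$ is literally the clutching function of the circle bundle $\rho\colon E\to B$ over the curve $\{\xi_0\}\times\mathbb S^1_c$, so that the off-diagonal monodromy entry and the Euler number are both read off as $\deg g$ --- is exactly the right mechanism. Note that the paper does not prove Theorem~\ref{theorem/fth} directly (it is quoted from the literature); it recovers it as the special case $N=1$ of Theorem~\ref{mth}, whose proof in Section~\ref{seifert} runs through the homological parallel transport of Definition~\ref{dpt0}: Lemma~\ref{lemma/existence} produces the transport of $a_0$ from a global section of the reduced bundle (a relative $2$-cycle), Lemma~\ref{lemma/uniqueness} gives uniqueness via the rank of $\partial_*(H_2(X,\partial X))$, and Lemma~\ref{lemma/eulernumber} computes the Euler number by comparing two cross sections that differ by $e^{ik\varphi}$ on the gluing circle --- which is your clutching computation in different clothes. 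So the two proofs are the same obstruction-theoretic argument organized differently: yours presents $E$ as a mapping torus and measures the winding of the gluing map, while the paper measures the failure of a section to close up; yours is shorter and self-contained for the free case, whereas the paper's formulation is the one that survives the passage to Seifert fibrations with exceptional orbits (where no global section exists and one must pass to $X/\mathbb Z_N$ and keep track of which multiples of $a_0$ can be transported at all). Your closing remark that the sign is pure bookkeeping is fair, as is your observation that one must know the Euler number of a bundle over $T^2$ clutched from a cylinder equals the degree of the transition function; both are standard. The only point worth making explicit, if you wanted a fully airtight write-up, is that the monodromy in the sense of Duistermaat (holonomy of the period lattice) agrees with the topological monodromy acting on $H_1$ of the fiber --- the paper records this equivalence in the unnumbered lemma of Section~\ref{mb} --- but for the statement as phrased (``monodromy of the $2$-torus bundle'') your reading is the intended one.
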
 

\begin{remark} \label{remark/focus}
The number $\langle {\bf e}, B \rangle$, which is obtained by integrating the Euler class over the base $B$, is called
the \textit{Euler} (or the \textit{Chern}) \textit{number} of the principal circle 
bundle $\rho \colon E \to B$.
Theorem~\ref{theorem/fth}
tells us that this Euler number 
determines the monodromy of the $2$-torus bundle $F \colon E \to \gamma$ and vice versa.

In  a neighborhood of the focus-focus fiber there exists a unique (up to orientation) system preserving $\mathbb S^1$ action that is 
free outside focus-focus points \cite{Zung1997}. On a small $3$-sphere $S^3_{\varepsilon}$ around a focus-focus point it defines a circle bundle with 
the Euler number $\langle {\bf e}, S^3_{\varepsilon} \rangle = 1$, the so-called \textit{anti-Hopf fibration}.
 It follows from Stokes' theorem  that 
for a small loop  $\gamma$
around the focus-focus critical value the Euler number $\langle {\bf e}, B \rangle$ 
equals the number of focus-focus points on the singular fiber. In particular,
 standard monodromy along $\gamma$ is non-trivial. More details can be found in \cite{Efstathiou2017}; see also 
 Section~\ref{sec/PTSM}. 
\end{remark}

Even though standard monodromy is related to singularities of the torus fibration $F$, it is an invariant of the regular, non-singular part $F \colon F^{-1}(R) \to 
R$.
An invariant that generalizes standard monodromy to singular torus fibrations is called \textit{fractional monodromy} \cite{Nekhoroshev2006}. 
We note that fractional monodromy is not a complete invariant of such fibrations --- it contains less information than the
\textit{marked molecule} in Fomenko-Zieschang theory \cite{Fomenko1990, Bolsinov2004, Bolsinov2012} --- but it is important for
applications and appears, for instance, in the so-called $m${:}$(-n)$ \textit{resonant systems}
\cite{Nekhoroshev2006, Nekhoroshev2007, Sugny2008, Schmidt2010, Efstathiou2013}; see Section 4.1 for details.

It was observed by Bolsinov \textit{et al.} \cite{Bolsinov2012}  that in 
$m${:}$(-n)$ resonant systems 
the circle action defines a 
Seifert fibration on a small $3$-sphere  around the equilibrium point and that the Euler number of this fibration equals the number appearing in the matrix of fractional 
monodromy, cf. Remark~\ref{remark/focus}.  
The question that remained unresolved  is why  this equality holds. In the present paper we give a complete answer to this question by proving the following results.

(i) Parallel transport and, therefore, fractional monodromy can be naturally
defined for closed Seifert manifolds (with an orientable base of genus $g > 0$). 

(ii) The fractional monodromy matrix is given by the Euler number of the
associated Seifert fibration. In the case of integrable systems, this Euler number
can be computed in terms of the fixed points of the circle action.

The latter result generalizes the corresponding results  of \cite{Efstathiou2013, Efstathiou2017} and, in particular, Theorem~\ref{theorem/fth}, thus
demonstrating that for standard and fractional monodromy the circle  action is more
important than the precise form of the integral map $F$.

We note that the importance of Seifert fibrations in integrable systems was discovered by Fomenko and Zieschang.
In their classification theorem \cite{Fomenko1990, Bolsinov2004}  Seifert manifolds play a central role: regular isoenergy surfaces of integrable nondegenerate  systems
 with $2$ degrees of freedom admit decomposition into families, each of which has a natural structure of a Seifert fibration. In our case of a global circle action
 there is only one such family, which has a certain label associated to it, the so-called $n$-\textit{mark} \cite{Bolsinov2004}. In fact, this $n$-mark coincides with the 
 Euler number that appears in Theorem~\ref{theorem/fth} and is related to the Euler number in the general case; see Remark~\ref{remark/mark}.  Our results therefore show how exactly this $n$-mark
 determines fractional monodromy.

\subsection{$1$:$(-2)$ resonant system} \label{sec/intrresonance}
Here, as a preparation to the more general setting of Sections~\ref{sec/PTSM} and \ref{sec/generalizedmonodromy}, we discuss the
famous example of a Hamiltonian system with fractional monodromy due to Nekhoroshev, Sadovski\'{i} and Zhilinski\'{i} \cite{Nekhoroshev2006}.

Consider $\setR^4$ with the standard symplectic structure $\omega = dq \wedge dp$. Define the \textit{energy} by
\begin{equation*}
H = 2q_1p_1q_2 + (q_1^2 - p_1^2)p_2 +  R^2, 
\end{equation*}
where $R = \frac{1}{2}(q_1^2 + p_1^2) + (q_2^2 + p_2^2)$, and the
 \textit{momentum} by
\begin{equation*}
J = \dfrac{1}{2}(q_1^2 + p_1^2) - (q_2^2 + p_2^2).
\end{equation*}
\begin{figure}[htbp]
  \includegraphics[width=\linewidth]{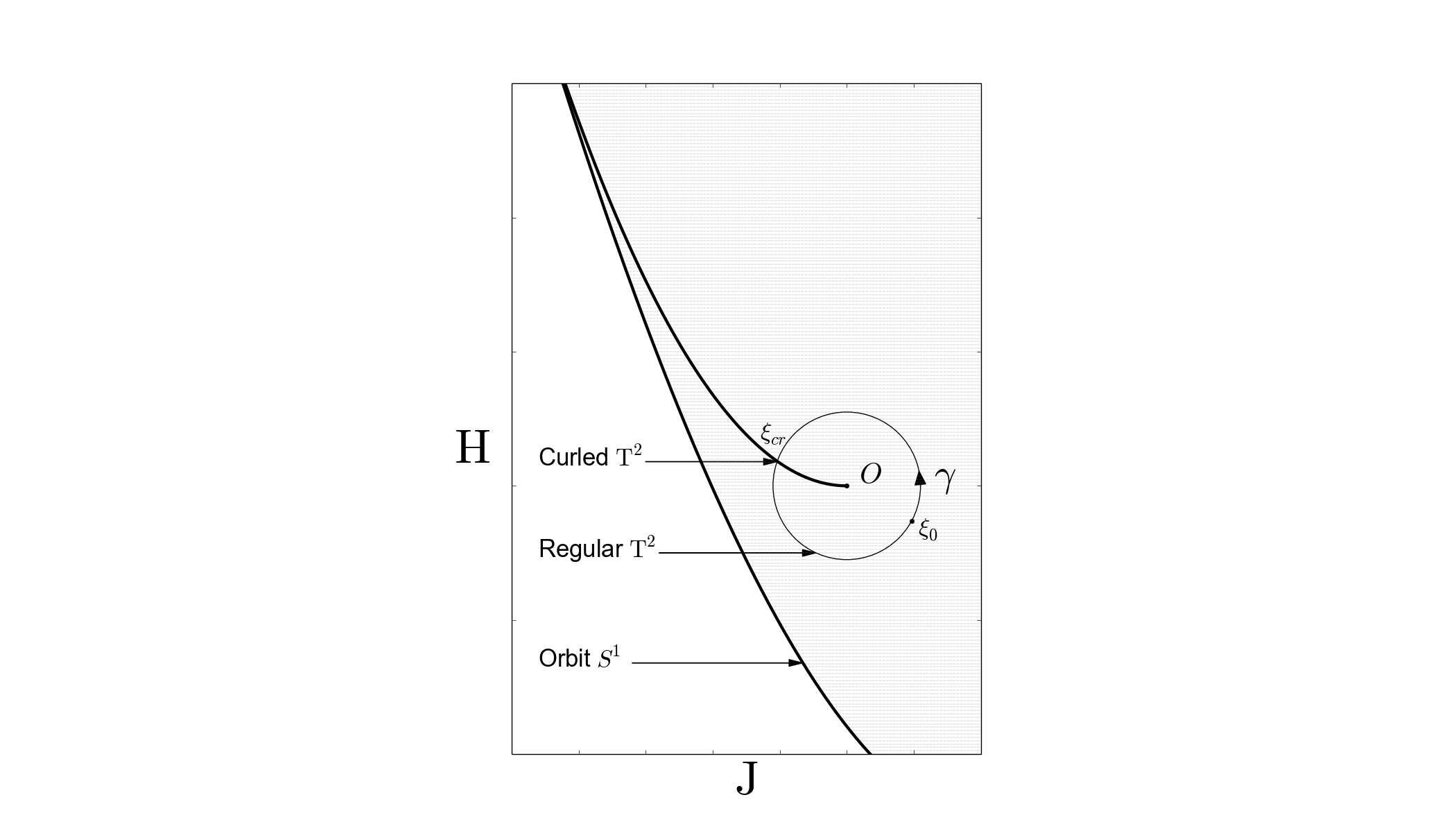}
  \caption{The bifurcation diagram of the  $1$:$(-2)$  resonant system.
   Critical values are colored black; the set $R$ is shown gray;
    the closed curve $\gamma$ around the origin intersects the hyperbolic branch of critical values once and transversely.}
  \label{1stBD}
\end{figure}

A straightforward computation shows that the functions $H$ and $J$ Poisson commute,
so the map $F = (J,H)$ defines an integrable Hamiltonian system on $\mathbb R^4$. 
The \textit{bifurcation diagram}, that is, the set of critical values of $F$,
is depicted in Figure~\ref{1stBD}. 
Let
$$R  = \{\xi \in \textup{image}(F) \mid \xi \mbox{ is a regular value of } F\}.$$
Since the fundamental group of the set $R$ vanishes, there is no monodromy and, thus, the $2$-torus bundle 
$F \colon F^{-1}(R) \to R$
admits a free fiber-preserving action of a $2$-torus. We observe that the  acting torus contains a subgroup $\mathbb S^1$ whose action extends from
$F^{-1}(R)$ to the whole $\mathbb R^4$. Indeed, such an action is given by the Hamiltonian flow of $J$. In complex coordinates
$z = p_1 + iq_1$ and $w = p_2+iq_2$ it has the form
\begin{equation} \label{circleaction}
(t,z,w) \mapsto (e^{it}z, e^{-2it}w), \ t \in \setS^1.
\end{equation}
From above it follows that the $\mathbb S^1$ action is
free on $F^{-1}(R)$ and, moreover, has a trivial
Euler class. However, on the whole phase space $\mathbb R^4$ the action is no longer trivial: the origin is fixed and the punctured plane
$$
P = \{(q,p) \mid q_1 = p_1 = 0 \mbox{ and } q_2^2 + p_2^2 \ne 0\}
$$
consists of points with $\mathbb Z_2$ isotropy group. This implies that the Euler number of the Seifert
$3$-manifold $F^{-1}(\gamma)$, where $\gamma$ is as in Fig.~\ref{1stBD}, equals $1/2 \ne 0$. Indeed, Stokes' theorem implies that
the Euler number of
$F^{-1}(\gamma)$ coincides with the Euler number of a small $3$-sphere around the origin $z = w = 0$. The latter Euler number equals $1/2$
because of  \eqref{circleaction}.

The following result
shows that the non-trivial Euler number of the Seifert manifold $F^{-1}(\gamma)$ enters the monodromy context, giving rise to what is now known as fractional monodromy.

\begin{lemma} \label{lemma/rth}
 Let $\mathbb Z_2 = \{1,-1\}$ denote the order two subgroup of the acting circle $\mathbb S^1.$ The quotient space
 $F^{-1}(\gamma) / \mathbb Z_2$ is the total space of a torus bundle over $\gamma$. Its standard monodromy is given by
 $
\begin{pmatrix}
 1 & 1 \\
 0 & 1
\end{pmatrix} \in \mathrm{SL}(2,\mathbb Z).
$

\end{lemma}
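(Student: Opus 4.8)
The plan is to exploit the globally defined circle action and reduce the statement to a computation of an Euler number, in the spirit of Theorem~\ref{theorem/fth}. Write $\mathbb Z_2=\{1,-1\}\subset\mathbb S^1$; by \eqref{circleaction} it acts on $F^{-1}(\gamma)$ by $(z,w)\mapsto(-z,w)$, so its fixed-point set is $P\cap F^{-1}(\gamma)=F^{-1}\bigl(\gamma\cap F(P)\bigr)$, where $F(P)=\{H=J^2,\ J<0\}$ is the hyperbolic branch. Since $\gamma$ meets this branch in exactly one point $\xi_0$ and $0\notin F^{-1}(\gamma)$, this fixed-point set is a single circle $\{z=0,\ |w|^2=c\}$ (for the appropriate $c>0$) lying over $\xi_0$, and everywhere else on $F^{-1}(\gamma)$ the $\mathbb S^1$-action is free. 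Consequently $E':=F^{-1}(\gamma)/\mathbb Z_2$ is a smooth closed $3$-manifold carrying a free, fibre-preserving action of the residual circle $\mathbb S^1/\mathbb Z_2\cong\mathbb S^1$ over $\gamma$.

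The next step, which I expect to be the main obstacle, is to show that $F\colon E'\to\gamma$ is a locally trivial $2$-torus bundle. Over each regular value of $F$ on $\gamma$ the fibre is $T^2/\mathbb Z_2$, with $\mathbb Z_2$ acting freely by a half-period translation along the $\mathbb S^1$-orbit, hence again a $2$-torus; the difficulty is concentrated at $\xi_0$, where $F^{-1}(\xi_0)$ is not a torus but a \emph{curled torus} --- it carries a circle of singular points along $\{z=0\}$ around which a nearby regular torus double-covers it. The content of the lemma is precisely that $\mathbb Z_2\subset\mathbb S^1$ is the subgroup responsible for this double wrapping, so that $F^{-1}(\xi_0)/\mathbb Z_2$ is a smooth $2$-torus and local triviality is restored across $\xi_0$. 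I would establish this by putting the $1{:}(-2)$ hyperbolic singularity into a local normal form near $\{z=0,\ |w|^2=c\}$ and using $(z,w)\mapsto(z^2,w)$ as the model for the $\mathbb Z_2$-quotient.

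Granting that $F\colon E'\to\gamma$ is a $2$-torus bundle with a free fibre-preserving circle action, Theorem~\ref{theorem/fth} applied to it gives monodromy $\left(\begin{smallmatrix}1&\langle{\bf e}',B\rangle\\0&1\end{smallmatrix}\right)$, where $B:=E'/\mathbb S^1=F^{-1}(\gamma)/\mathbb S^1$ and ${\bf e}'$ is the Euler class of the principal circle bundle $E'\to B$. It remains to compute $\langle{\bf e}',B\rangle$. As recalled in the text, the Seifert fibration $F^{-1}(\gamma)\to B$ has Euler number $1/2$ --- by Stokes' theorem this equals the Euler number of a small $3$-sphere around the origin, which is $1/2$ by \eqref{circleaction}. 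Since $E'=F^{-1}(\gamma)/\mathbb Z_2$ is obtained by a fibrewise $\mathbb Z_2$-quotient --- every regular Seifert fibre is now traversed at half its former length and the single exceptional fibre of order two becomes regular --- the Euler number is multiplied by $2$, so $\langle{\bf e}',B\rangle=2\cdot\tfrac12=1$ and the monodromy equals $\left(\begin{smallmatrix}1&1\\0&1\end{smallmatrix}\right)$. Equivalently, once $E'\to\gamma$ is known to be an honest torus bundle one can track cycles directly: the $\mathbb S^1$-orbit $a_1$ has trivial parallel transport while a complementary cycle $a_2$ undergoes the fractional transport $a_2\mapsto a_2+\tfrac12 a_1$, and since the quotient map sends $a_1$ twice around the residual orbit $b_1$ and $a_2$ once around a cycle $b_2$, this descends to $b_1\mapsto b_1$, $b_2\mapsto b_2+b_1$.
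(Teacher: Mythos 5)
Your proposal is correct and follows essentially the same route as the paper: show that the regular fibres quotient to tori and that the curled torus over the critical value $\xi_{cr}$ becomes a smooth $2$-torus after the $\mathbb Z_2$-quotient (you use the local model $(z,w)\mapsto(z^2,w)$ where the paper uses the figure-eight-cylinder description), and then apply Theorem~\ref{theorem/fth} together with the Euler-number computation $e\bigl(F^{-1}(\gamma)/\mathbb Z_2\bigr)=2\cdot\tfrac12=1$. Your explicit identification of the $\mathbb Z_2$-fixed set as the circle over the point where $\gamma$ meets the branch $\{H=J^2,\ J<0\}$ is a nice concretization of what the paper leaves implicit, but it does not change the argument.
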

\begin{proof}
 Let $\xi \in \gamma \cap R$. Then the fiber $F^{-1}(\xi)$ is a $2$-torus. Since the $\mathbb S^1$ action is free on this fiber,
 the quotient $F^{-1}(\xi) / \mathbb Z_2$ is a $2$-torus as well. 
 
 Consider the critical value $\xi_{cr} \in \gamma.$
 Its preimage  $F^{-1}(\xi_{cr})$ is the so-called \textit{curled torus}; see Fig.~\ref{curledtorus}.
 \begin{figure}[htbp]
\newlength{\halfwidth}
\setlength{\halfwidth}{0.4\linewidth}

\hspace{0.38cm}
\begin{subfigure}[b]{\halfwidth}
  \includegraphics[width=\halfwidth]{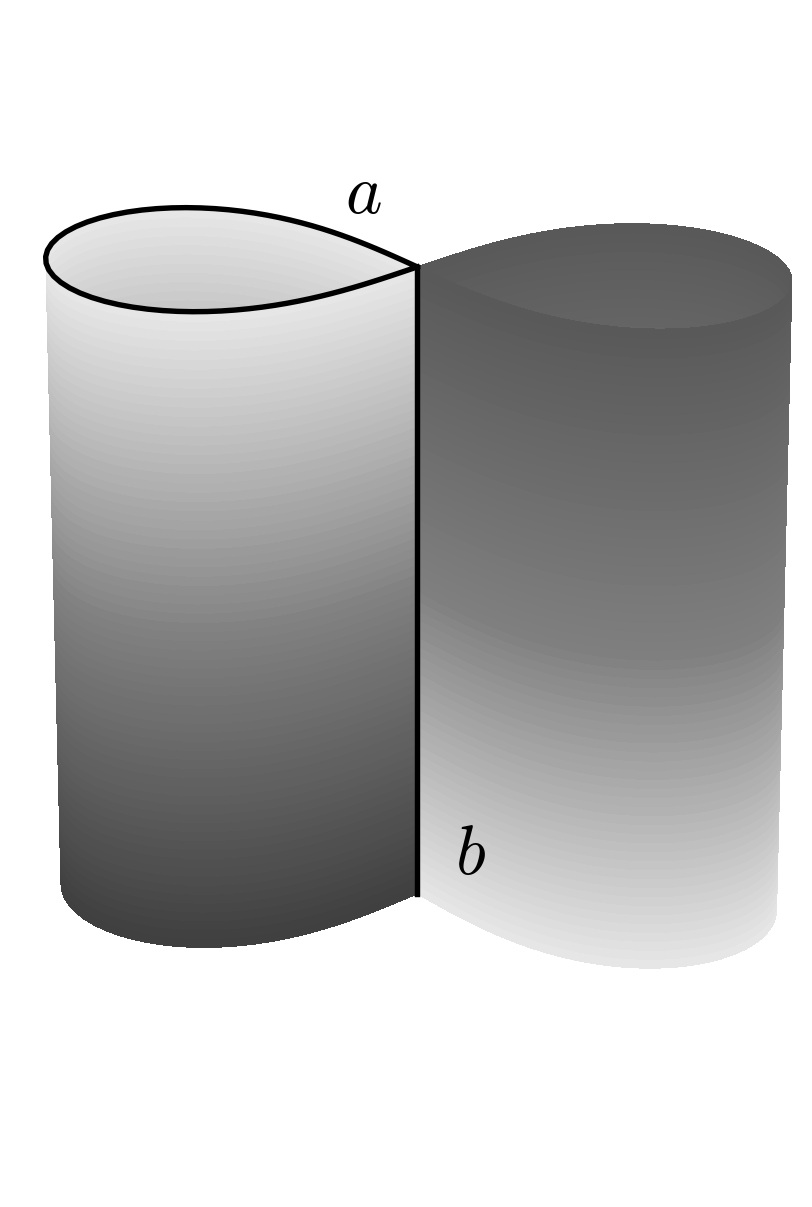}
  \caption{Cycles $(a,b)$}
  \label{homcycles}
\end{subfigure}
\hspace{0.9cm}
\begin{subfigure}[b]{\halfwidth}
  \includegraphics[width=\halfwidth]{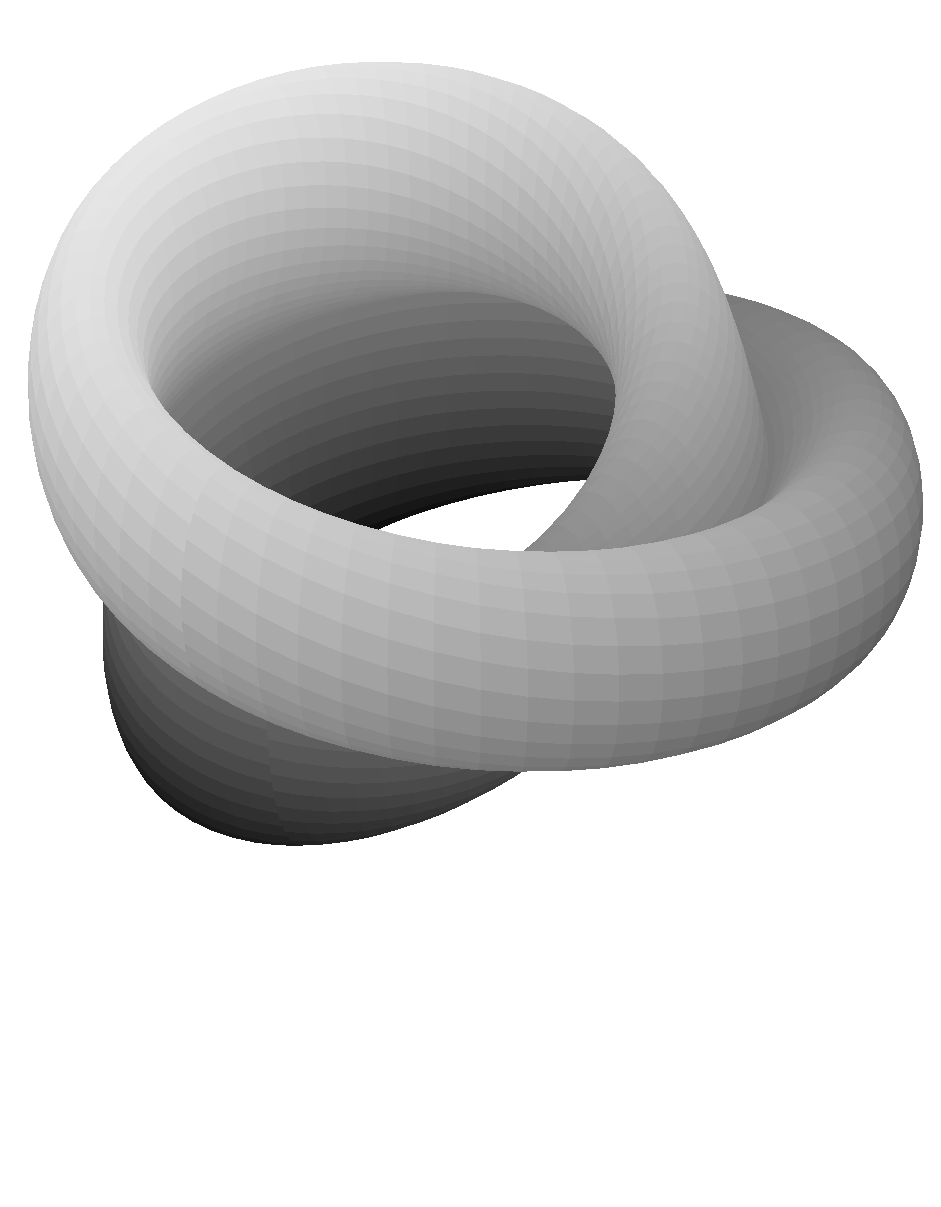}
  \caption{Curled torus}
  \label{curledtorus}
\end{subfigure}
\caption{Representation of a curled torus. Take a cylinder over the figure `eight', as shown in Figure $(a)$. 
Glue the upper and lower halves of this cylinder after rotating the upper part by $\pi$. The resulting surface is a curled torus $(b)$. }
\end{figure}
In this case there is a `short' 
orbit $b$ of the $\mathbb S^1$ action, formed by the fixed points of the  $\mathbb Z_2$ action.  The `short' orbit passes through the tip of the cycle $a$; see Fig.~\ref{homcycles}.
Other orbits are `long', that is, principal. From this description it follows that
after taking the $\mathbb Z_2$ quotient  only half of the cylinder survives and, thus,
$F^{-1}(\xi_{cr}) / \mathbb Z_2$ is topologically a $2$-torus. In view of \cite{Fomenko1990}, we have shown that
$$F \colon F^{-1}(\xi) / \mathbb Z_2 \to \gamma$$
is a torus bundle.
In order to complete the proof of the theorem it is left to apply Theorem~\ref{theorem/fth}. Indeed, since 
the Euler number of $F^{-1}(\gamma)$ equals $1/2$,
the Euler number of $F^{-1}(\gamma) / \mathbb Z_2$ equals $1$.
\end{proof}

\begin{remark} \label{remark/mark}
 Lemma~\ref{lemma/rth} can be reformulated by saying that the $n$-mark of the loop molecule  associated to $\gamma$ equals $1$. The molecule has the form shown in
 Fig.~\ref{Molecule}. Note that the $A^*$ atom corresponds to the curled torus Fig.~\ref{curledtorus}. A similar statement holds for higher-order resonances.
 \begin{figure}[htbp]
\hspace{0cm}
  \includegraphics[width=0.6\linewidth]{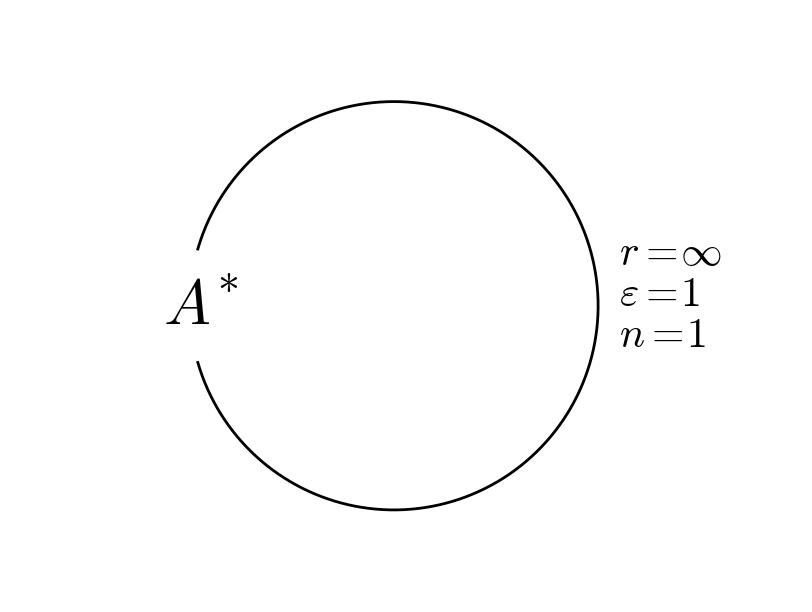}
  \caption{The loop molecule associated to $\gamma$.}
  \label{Molecule}
\end{figure}
\end{remark}

\begin{remark}
 We note that the symplectic structure on an open neighborhood $O$ of the manifold $F^{-1}(\gamma)$ does not descend to the $\mathbb Z_2$-quotient. Hence,
 the fibration
 $F \colon O /\mathbb Z_2 \to \mathbb R^2$ does not carry a natural Lagrangian structure
and Duistermaat's monodromy (parallel transport) along $\gamma$ is not defined. Instead, we use the more general Definition~\ref{dpt0}.
\end{remark}

Cutting the manifold $F^{-1}(\gamma) / \mathbb Z_2$ along any fiber $F^{-1}(\xi_0)/\mathbb Z_2, \, \xi_0 \in \gamma \cap R,$ we
get a manifold $X$ with the boundary $\partial X = X_0 \sqcup X_1$ consisting of the two tori $X_i.$
Following \cite{Efstathiou2013}, we define the parallel transport 
using the connecting homomorphism of the long exact sequence of the pair $(X, \partial X)$.

\begin{definition} \label{dpt0}
  The cycle $\alpha_1 \in H_1(X_1)$ is a \textit{parallel transport} of the cycle $\alpha_0 \in H_1(X_0)$ \textit{along} $X$ if  $$(\alpha_0, -\alpha_1) \in \partial_{*}(H_2(X, \partial X)),$$ 
  where $\partial_{*}$ is the connecting homomorphism of the exact sequence
   \begin{equation*}
    \cdots  \rightarrow H_{2}\left(X\right)\rightarrow H_{2}\left(X, \partial X\right)\xrightarrow{ \\ \partial_{*} \\ } 
    H_{1}\left(\partial X\right)\rightarrow H_{1}\left(X\right)\rightarrow \cdots 
  \end{equation*}
\end{definition}
\begin{figure}[htbp]
\hspace{0cm}
  \includegraphics[width=0.7\linewidth]{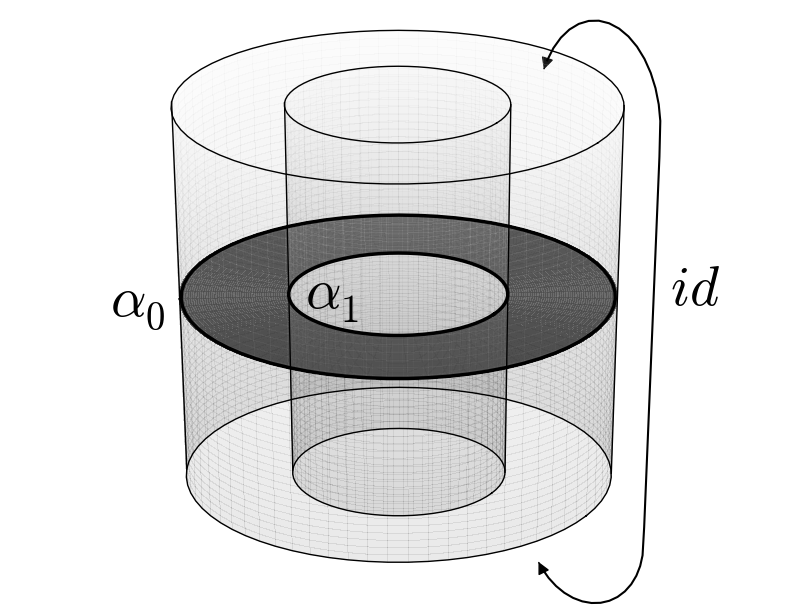}
  \caption{Parallel transport along $X$.}
  \label{PT}
\end{figure}

\begin{remark}
 Definition~\ref{dpt0} is applicable to an arbitrary manifold $X$ with boundary $\partial X = X_0 \sqcup X_1$. 
 For compact $3$ manifolds it may be reformulated as follows (see \cite{Hatcher2000}): $\alpha_1$ is a parallel transport of $\alpha_0$ along $X$ if 
 there exists an oriented $2$-dimensional submanifold $S \subset X$
 that `connects' $\alpha_0$ and $\alpha_1$: 
 $$\partial S = S_0 \sqcup S_1 \mbox{ and } [S_i] = (-1)^i\alpha_i \in H_1(X_i);$$
 see Fig.~\ref{PT}. We note, however, that even for compact $3$-manifolds  it might happen that, for a given homology cycle,
the parallel transport is not defined or is not  unique. 
For manifolds  $F^{-1}(\gamma)$ and $F^{-1}(\gamma)/\mathbb Z_2$ (and, more generally, for Seifert manifolds)  the parallel transport is unique; see Theorem~\ref{mth}. 
\end{remark}

From Lemma~\ref{lemma/rth} we infer that, in a homology basis of the fiber $F^{-1}(\xi_0)/\mathbb Z_2$, the 
parallel transport has the form of the monodromy matrix
\[
\begin{pmatrix}
 1 & 1 \\
 0 & 1
\end{pmatrix} \in \mathrm{SL}(2,\mathbb Z).
\]
For the fibration $F^{-1}(\gamma) \to \gamma$ this manifests the presence of nontrivial fractional monodromy.

\begin{theorem} \textup{(\cite{Nekhoroshev2006})} \label{theorem/rth}
 Let $(a_0,b_0)$ be   an integer basis of $H_1(F^{-1}(\xi_0))$, where $b$ is given by any orbit of the $\mathbb S^1$ action.
 The parallel transport is unique and has the form $2 a_0 \mapsto 2 a_0 + b_0$ and $b_0 \mapsto b_0$.
\end{theorem}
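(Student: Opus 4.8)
The plan is to obtain the statement from Lemma~\ref{lemma/rth} by transporting everything along the quotient map $q\colon E\to E'$, where $E:=F^{-1}(\gamma)$ and $E':=E/\mathbb{Z}_2=F^{-1}(\gamma)/\mathbb{Z}_2$. First I would record how $q$ acts on the homology of a regular fibre. Since $\mathbb{Z}_2\subset\mathbb{S}^1$ is generated by the half-period shift along the orbits, it acts freely on $T:=F^{-1}(\xi_0)$, and $q|_T\colon T\to T':=T/\mathbb{Z}_2$ is a connected double cover with $q_*a_0=\bar a_0$ and $q_*b_0=2\bar b_0$, where $(\bar a_0,\bar b_0)$ is an integer basis of $H_1(T')$ and $\bar b_0$ is an orbit of the induced $\mathbb{S}^1/\mathbb{Z}_2$ action; in particular $q_*$ is injective on $H_1(T)$. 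By Lemma~\ref{lemma/rth} and Theorem~\ref{theorem/fth} the parallel transport (i.e. bundle monodromy) along $E'$ is unique and sends $\bar a_0\mapsto\bar a_0+\bar b_0$ and $\bar b_0\mapsto\bar b_0$.

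Next I would settle uniqueness in $E$. Cutting $E$ and $E'$ along $T$ and $T'$ produces $X$ and $X'=X/\mathbb{Z}_2$, and $q$ is a map of pairs $(X,\partial X)\to(X',\partial X')$; naturality of the long exact sequence gives $q_*\circ\partial_*=\partial_*\circ q_*$, so the $q_*$-image of any parallel transport in $X$ is a parallel transport in $X'$. Hence if $\alpha_0$ had two parallel transports $\alpha_1,\alpha_1'$ in $X$, then $q_*\alpha_1$ and $q_*\alpha_1'$ would both be parallel transports of $q_*\alpha_0$ in $X'$; uniqueness of the torus-bundle monodromy of $E'$ forces $q_*\alpha_1=q_*\alpha_1'$, and injectivity of $q_*$ on $H_1(T)$ gives $\alpha_1=\alpha_1'$. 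Since $\partial_*H_2(X,\partial X)$ is a subgroup, parallel transport in $E$ is thus a well-defined homomorphism on its domain.

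To finish I would exhibit the two transports explicitly, lifting surfaces from $X'$ through $q$. The branch locus of $q$ is the short orbit $b\subset F^{-1}(\xi_{cr})$ (the points with $\mathbb{Z}_2$ isotropy, see Figs.~\ref{homcycles}--\ref{curledtorus}), and its image $\bar b$ lies in the middle fibre of $E'$ in the class $\bar b_0$. For $b_0\mapsto b_0$: a parallel copy of $\bar b_0$ in each fibre sweeps out an annulus in $X'$ disjoint from $\bar b$, whose preimage is an honest double cover — an annulus in $X$ with boundary $b_0\sqcup b_0$ (each component maps $2{:}1$ onto $\bar b_0$, hence lies in the class $b_0$ by injectivity of $q_*$). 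For $2a_0\mapsto 2a_0+b_0$: pick a surface $\bar S\subset X'$ realising $\bar a_0\mapsto\bar a_0+\bar b_0$ in the sense of Definition~\ref{dpt0}; after a homotopy $\bar S$ meets $\bar b$ transversally in a single point (the algebraic intersection number is $\pm1$ because $\bar b$ is homologous to $\bar b_0$, which pairs with $\bar a_0$ to $\pm1$). Then $q^{-1}(\bar S)$ is again a $2$-chain in $X$: over $\bar S\setminus\bar b$ it is the genuine double cover, and near the single branch point, where the monodromy of $q$ is nontrivial, the two sheets join into one disk. Its boundary lies over $\partial\bar S\subset\partial X'$, away from $\bar b$, so it equals $q^{-1}(\bar a_0)\sqcup(-q^{-1}(\bar a_0+\bar b_0))$; here $[q^{-1}(\bar a_0)]=2a_0$ (two sheets, each homologous to $a_0$), while $q^{-1}(\bar a_0+\bar b_0)$ is a single circle double-covering $\bar a_0+\bar b_0$, so $q_*[q^{-1}(\bar a_0+\bar b_0)]=2(\bar a_0+\bar b_0)=q_*(2a_0+b_0)$ and hence $[q^{-1}(\bar a_0+\bar b_0)]=2a_0+b_0$ by injectivity of $q_*$. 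This gives $2a_0\mapsto 2a_0+b_0$, and a divisibility check ($2a_0+b_0$ is not twice a cycle) shows the domain of parallel transport is exactly $\langle 2a_0,b_0\rangle$.

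I expect the main obstacle to be the last paragraph: justifying that $q$ is a branched double cover whose local model near the short orbit is the standard one, so that the single branch point carried by $\bar S$ really unfolds to one smooth disk and $q^{-1}(\bar S)$ is a legitimate chain with the stated boundary. An alternative that avoids $q$ altogether is to note that $X$ deformation retracts onto a neighbourhood of the curled torus $F^{-1}(\xi_{cr})$, which is homotopy equivalent to a $2$-torus in which a principal orbit becomes homologous to twice the short orbit; computing the two boundary inclusions $H_1(T)\to H_1(X)$ then yields the result directly — but pinning down the relative shift of these two inclusions is once again exactly the input provided by Lemma~\ref{lemma/rth}.
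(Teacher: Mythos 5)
Your proposal is correct, and its engine --- descend to the cyclic quotient, where the fibration becomes an honest bundle, transport there, and pull the transporting surfaces back through the branched covering --- is precisely the mechanism of the paper's Lemma~\ref{lemma/existence}. The packaging is genuinely different, though. The paper obtains Theorem~\ref{theorem/rth} as a special case of the general Seifert result, Theorem~\ref{mth}, with $N=2$ and $k=Ne(E)=1$, the Euler number $e(E)=1/2$ coming from the weight-$(1,2)$ fixed point at the origin via Theorem~\ref{generalthm}; in Section~\ref{seifert} uniqueness is proved by the Poincar\'e--Lefschetz half-rank argument (Lemma~\ref{lemma/uniqueness}) and the span $\langle Na_0,b_0\rangle$ by localizing to annuli around each exceptional orbit (Lemma~\ref{lemma/span}). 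You instead take Lemma~\ref{lemma/rth} as the input and replace those two general lemmas by special-case shortcuts: uniqueness via naturality of $\partial_*$ under $q$ combined with injectivity of $q_*$ on the boundary tori (valid because the $\mathbb{Z}_2$ action is free there), and the span via the observation that $2a_0+b_0$ is not divisible by $2$. Both shortcuts are correct here but buy simplicity at the cost of generality: the divisibility trick exploits $\gcd(k,N)=1$ and would fail for, e.g., $k=0$, which is exactly why the paper needs the annuli argument of Lemma~\ref{lemma/span}. The step you flagged --- that $q$ is the standard branched double cover near the short orbit, so that $q^{-1}(\bar S)$ is a legitimate surface with the stated boundary --- is not a real gap: by the slice theorem the $\mathbb{Z}_2$-involution acts as $-\mathrm{id}$ on the normal disk to $b$, giving the local model $z\mapsto z^2$; this is the content of the paper's Fig.~\ref{zet2}. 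The only other point to watch is bookkeeping: the matrix in Lemma~\ref{lemma/rth} is taken in the basis $(\bar a_0,\bar b_0)$ with $\bar b_0$ the fiber class, and the two boundary tori of the cut manifold must be identified tautologically (not via the product structure of $X'\cong T'\times[0,1]$) for the statement $(\,\bar a_0,-(\bar a_0+\bar b_0)\,)\in\partial_*H_2(X',\partial X')$ to be the one you lift; with that convention your computation matches the paper's.
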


\begin{remark}
When written formally in an integer basis $(a_0,b_0)$, parallel transport has the form of a \textit{rational} matrix
\[
\begin{pmatrix}
 1 & 1/2 \\
 0 & 1
\end{pmatrix} \in \mathrm{SL}(2,\mathbb Q),
\]
called the matrix of \textit{fractional} monodromy.
\end{remark}

 Since the pioneering work \cite{Nekhoroshev2006}, various proofs of Theorem~\ref{theorem/rth} appeared; see
 \cite{Efstathiou2007, Sugny2008, Broer2010, Tonkonog2013} and \cite{Efstathiou2013}.
 Our proof, which is based on the singularities of
 the circle action,
 shows that 
 \begin{itemize}
\item the fixed point $\boldsymbol{0} \in \setR^4$ of the $\mathbb S^1$ action and
\item the short orbit $b$ with $\setZ_2$ isotropy
\end{itemize}
manifest the presence of fractional monodromy in this $1{:}(-2)$ resonant system. A similar kind of result holds in 
a
general setting of Seifert manifolds;
see Section~\ref{sec/PTSM}, and, in particular, in the setting of Hamiltonian systems with $m{:}(-n)$ 
resonance; see Subsection~\ref{resonance}.

 \subsection{The paper is organized as follows}

In Section~\ref{sec/PTSM} we consider a general setting of Seifert fibrations. We show that the parallel transport 
along the total space of such a fibration is given by its Euler number and the orders of the exceptional orbits; see Theorems~\ref{mth}. In the case when 
a Seifert fibration admits an equivariant filling, the
Euler number is 
 given by the fixed points of the
circle action inside the filling manifold; see Theorem~\ref{generalthm}. 

In Section~\ref{application}, after discussing the concepts of standard and (more general) \textit{fractional} monodromy in integrable Hamiltonian systems, 
we apply the results of Section~\ref{sec/PTSM} to fractional monodromy in the $2$ degrees of freedom case; see Theorems~\ref{corexistence} and \ref{simplef}. 
These theorems specify the
subgroup of homology cycles that admit parallel transport, and give a formula
for the computation of the fractional monodromy. These results, moreover, demonstrate that for standard and fractional monodromy the circle action is more important
than the precise form of the integral map.

Examples are investigated in Section~\ref{examples}.
The proof of Theorem~\ref{mth}  is given in Section~\ref{seifert}. We conclude with a discussion in Section~\ref{discussion}.

\section{Parallel transport along Seifert manifolds} \label{sec/PTSM}

\subsection{Seifert fibrations}

In the present subsection we recall the notions of a Seifert fibration and its Euler number. For a more detailed exposition we refer to \cite{Fomenko2010}.

\begin{definition} \label{defsm}
 Let $X$ be a compact orientable $3$-manifold (closed or with boundary) which is
invariant under an effective fixed point free $\mathbb S^1$ action. Assume that the $\mathbb S^1$ action is free on the boundary $\partial X$. Then
$$ \rho \colon X \to B = X / \mathbb S^1 $$
is called a \textit{Seifert fibration}. The manifold $X$ is called a \textit{Seifert manifold}.
\end{definition}

\begin{remark}
 From the slice theorem \cite[Theorem I.2.1]{Audin2004} (see also \cite{Bochner1945}) it follows that the quotient $B = X / \mathbb S^1$ is an orientable 
 topological
 $2$-manifold. Seifert fibrations are also defined in a more general setting when the base $B$ is non-orientable; see \cite{Fomenko2010}, \cite{Jankins1983}. However, 
 in this case there is no $\mathbb S^1$ action and the parallel transport is not unique; see Remark~\ref{remarknsm}. We will therefore consider
 the orientable case only.
 
\end{remark}

 Consider a Seifert fibration 
 $$\rho \colon X \to B = X / \mathbb S^1$$
 of a closed Seifert manifold $X$.  Let $N$ be
 the least common multiple of the orders of the exceptional orbits, that is, the orders of non-trivial isotropy groups. Since $X$ is compact, the number $N$ is well defined.  Denote by 
 $\mathbb Z_N$ the order $N$ subgroup of the acting circle $\mathbb S^1$.
 The subgroup $\mathbb Z_N$ acts on the Seifert manifold $X$. We thus have the reduction map $h \colon X \to X' = X/ \mathbb Z_N$ and the commutative diagram
 $$
\begin{tikzcd}
X \arrow{r}{h} \arrow[swap]{d}{\rho} & X' \arrow{dl}{\rho'} \\
B & 
\end{tikzcd}
$$
with $\rho'$ defined via $\rho = \rho' \circ h$.
 By the construction,
 $\rho' \colon X'  \to B$
 is a principal circle bundle over $B$. We denote its Euler number by $e(X')$.

\begin{definition}
 The \textit{Euler number} of the Seifert fibration 
 $\rho \colon X \to B = X / \mathbb S^1$ is defined by $e(X) = e(X')/N.$
\end{definition}

\begin{remark} \label{remarku}
 We note that a closed Seifert manifold $X$ can have non-isomorphic $\mathbb S^1$ actions with
 different Euler numbers. Indeed, let $m$ and $n$ be co-prime integers. Consider the $\mathbb S^1$ action 
$$ 
(t,z,w) \mapsto (e^{imt}z, e^{-int}w), \ t \in \setS^1,
$$
on the $3$-sphere $S^3 = \{(z,w) \mid |z|^2 + |w|^2 = 1\}$. Then the Euler number of the fibration 
$\rho \colon S^3 \to S^3 / \mathbb S^1$
equals $1/mn$. Despite this non-uniqueness, we sometimes refer to $e(X)$ as the Euler number of the Seifert manifold $X$. This should not be a cause of confusion since it will be always clear from the context what is the underlying $\mathbb S^1$ action.
\end{remark}

In the following Subsection~\ref{mresult} we show that the Euler number of a Seifert fibration is an obstruction to the existence of a trivial parallel transport; see Definition~\ref{dpt0}.

\subsection{Parallel transport} \label{mresult}

Consider a Seifert fibration
$\rho \colon X \to B = X / \mathbb S^1$ such that the boundary  $\partial X = X_0 \sqcup X_1$ consists of two $2$-tori $X_0$ and $X_1$. 
Take an orientation and fiber preserving homeomorphism $f \colon X_0 \to X_1$. Any homology basis
 $(a_0,b_0)$ of $H_1(X_0)$ can be then mapped to the homology basis 
 $$(a_1 = f_{\star}(a_0), b_1 = f_{\star}(b_0))$$ 
 of $H_1(X_1).$ In what follows we assume that $b_0$ is equal to the homology class of a (any) fiber of the Seifert fibration on $X_0.$
Let 
$$X(f) = X / \sim, \ \ X_0 \ni x_0 \sim f(x_0) \in X_1, $$ 
be the closed Seifert manifold that is obtained from $X$ by gluing the boundary components using $f$.

Finally, let $N$ be the least common multiple of
$n_j$ -- the orders of the exceptional orbits.
With this notation we have the following result.

\begin{theorem} \label{mth}
  The parallel transport along $X$ is unique. Only linear combinations of $Na_0$ and $b_0$ can be parallel transported along $X$ and under the parallel
  transport
  \begin{equation*}
 \begin{split}
N a_0 & \mapsto N a_1 + k b_1 \\
b_0 & \mapsto b_1
\end{split}
\end{equation*} 
for some integer $k = k(f)$ which depends only on the isotopy class of $f.$ Moreover, the Euler number of $X(f)$ is given by 
$e(f) = k(f)/N.$
\end{theorem}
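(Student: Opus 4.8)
The plan is to reduce everything to the homomorphism $j_*=(j_0,j_1)\colon H_1(\partial X)=H_1(X_0)\oplus H_1(X_1)\to H_1(X)$ induced by the two boundary inclusions. By Definition~\ref{dpt0} and exactness of the homology sequence of $(X,\partial X)$ one has $\partial_*\bigl(H_2(X,\partial X)\bigr)=\ker j_*$, so that $\alpha_1$ is a parallel transport of $\alpha_0$ along $X$ if and only if $j_0(\alpha_0)=j_1(\alpha_1)$ in $H_1(X)$. Hence parallel transport is unique precisely when $j_1$ is injective, a class $\alpha_0$ admits parallel transport precisely when $j_0(\alpha_0)\in\operatorname{im}j_1$, and the transport matrix is read off from the images of $a_0$ and $b_0$. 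Since $b_0$ and $b_1=f_\star(b_0)$ are regular fibres and all regular fibres of a connected Seifert manifold are homologous (join two of them by an arc in $B$ missing the exceptional orbits and take the preimage annulus), $j_0(b_0)=j_1(b_1)=:F$, and that annulus already exhibits $b_0\mapsto b_1$.

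Injectivity of $j_1$, and hence uniqueness, is easy: if $j_1(pa_1+qb_1)=0$ then applying $\rho_*\colon H_1(X)\to H_1(B)$, which kills $F$, gives $p\,[c_1]=0$; but $[c_1]$ has infinite order in $H_1(B)$ (for a compact orientable surface with two boundary circles the image of $H_1(\partial B)\to H_1(B)$ is the rank-one group spanned by $[c_0]=-[c_1]$), so $p=0$; then $qF=0$, and pushing forward under the reduction $h\colon X\to X'=X/\mathbb Z_N$ onto the associated principal circle bundle $\rho'\colon X'\to B$ gives $q\,h_\star(F)=0$ in $H_1(X')\cong H_1(B)\oplus\mathbb Z$, where the fibre class has infinite order and $h_\star(F)$ is $N$ times it; so $q=0$.

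For the transportable subgroup, decompose $X=X^\circ\cup\bigcup_j V_j$, with $V_j$ the fibred solid torus around the $j$-th exceptional orbit (of order $n_j$) and $X^\circ\to B_0$ a trivial principal circle bundle, $B_0$ being $B$ with small open disks about the exceptional points removed. A relative $2$-cycle $S$ realizing a parallel transport can be made transverse to the tori $T_j=\partial V_j$; its part in $X^\circ$ then has a well-defined section degree $d$, namely the $a_0$-coefficient of $[S\cap X_0]$, and this same $d$ is the section degree of $[S\cap T_j]$ for every $j$. Capping $S$ off inside $V_j$ requires $[S\cap T_j]$ to vanish in $H_1(V_j)\cong\mathbb Z$; since there the regular fibre maps to $n_j$ while a curve meeting each fibre once maps to an integer coprime to $n_j$, this forces $n_j\mid d$, hence $N\mid d$. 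Conversely $Na_0$ is transported by a $2$-cycle built from $N$ parallel section-copies over $X^\circ$, corrected near each $T_j$ by fibre-windings so as to become nullhomologous in $V_j$ (possible exactly because $n_j\mid N$) and capped off there. Thus the transportable subgroup of $H_1(X_0)$ is exactly $\langle Na_0,b_0\rangle$, and $Na_0\mapsto Na_1+k(f)b_1$ for a well-defined integer $k(f)$, which depends only on the isotopy class of $f$ (as does $[S\cap X_1]$). The same conclusions come out of the Seifert presentation of $\pi_1(X)$: abelianizing yields $H_1(X)/\operatorname{im}j_1\cong\mathbb Z^{2g}\oplus\bigoplus_j\mathbb Z_{n_j}$ with $[j_0(a_0)]\mapsto(1,\dots,1)$, of order $\operatorname{lcm}_j n_j=N$.

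It remains to identify $k(f)/N$ with $e(f)=e(X(f))$, which I would do by transferring along $h$. Taking $f$ to be $\mathbb S^1$-equivariant (harmless up to isotopy), it descends to $f'$ with $h\circ f=f'\circ h$, and on each boundary torus $h_\star$ sends $a_i$ to a primitive class $a_i'$ covering $c_i$ and $b_i$ to $Nb_i'$, with $a_1'=f'_\star(a_0')$ and $b_1'=f'_\star(b_0')$. Functoriality of $\partial_*$ for the map of pairs $h\colon(X,\partial X)\to(X',\partial X')$ turns $Na_0\mapsto Na_1+k(f)b_1$ into $Na_0'\mapsto Na_1'+k(f)Nb_1'$ along $X'$. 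But on the principal bundle $X'$ parallel transport is the classical homomorphism governed by the Euler number (see Theorem~\ref{theorem/fth} and \cite{Bolsinov2004,Efstathiou2017}), $a_0'\mapsto a_1'+e(X(f)')b_1'$ and $b_0'\mapsto b_1'$; comparing and dividing by $N$ gives $k(f)=e(X(f)')$, and since by definition $e(X)=e(X')/N$ for Seifert fibrations we conclude $e(f)=e(X(f)')/N=k(f)/N$. The genuinely delicate part here is not conceptual but bookkeeping: carrying the single $f$-dependent integer correctly through $j_1\circ f_\star$ and the $h$-transfer, and matching orientation conventions so that the sign works out to $k(f)=+e(X(f)')$; a self-contained alternative to quoting the principal case is to evaluate $N\bigl(j_0(a_0)-j_1(a_1)\bigr)$ directly in $H_1(X)$ from the Seifert data and recognize it, via the Seifert-invariant formula for the Euler number, as $N\,e(X(f))\cdot F$.
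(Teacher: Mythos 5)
Your argument is correct and reaches all four conclusions of Theorem~\ref{mth}, but it routes around the paper's proof (Section~\ref{seifert}) in several places. For uniqueness, the paper uses the Poincar\'e--Lefschetz ``half lives, half dies'' argument: $\partial_*(H_2(X,\partial X))$ has rank $2$, already contains $(Na_0,Na_1+kb_1)$ and $(b_0,b_1)$, so it cannot also contain a nonzero $(0,c)$ (Lemma~\ref{lemma/uniqueness}); you instead prove injectivity of $j_1$ directly by pushing forward to $H_1(X/\setS^1)$ and to $H_1(X/\setZ_N)$ --- more elementary, and it makes transparent exactly which topological features (infinite order of the boundary class and of the fiber class in the trivialized quotient) are responsible. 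For the index-$N$ statement, the paper localizes with annuli $A_j$ around each exceptional point and bootstraps from its Euler-number lemma applied to the sub-Seifert-manifolds $E_j$ (Lemma~\ref{lemma/span}); your fibered-solid-torus decomposition with the constancy of the intersection number of $S$ with a regular fiber is more self-contained and avoids any appeal to Euler numbers at this stage, and your aside via the Seifert presentation of $\pi_1(X)$ (order of $[j_0(a_0)]$ in $H_1(X)/\operatorname{im}j_1$ equals $\operatorname{lcm}n_j$) is a genuinely slicker packaging of the same fact. Existence is essentially the paper's Lemma~\ref{lemma/existence} in different clothes: the paper takes a global section of the trivial bundle $X'=X/\setZ_N$ and pulls it back under the branched $N$-covering, while you build the $N$-sheeted surface by hand over $X^\circ$ and cap it off in the $V_j$; these are the same surface.

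The one soft spot is the final identification $k(f)=e(X(f)')$. Your reduction to the principal bundle $X'$ via functoriality of $\partial_*$ under $h$ is exactly the paper's strategy, but the statement you then quote as ``classical'' --- that parallel transport of a section class of a principal circle bundle over a closed orientable surface, cut along a nonseparating curve, picks up $e$ times the fiber --- is precisely what the paper proves by an explicit two-section obstruction computation in Lemma~\ref{lemma/eulernumber}, and your citation of Theorem~\ref{theorem/fth} does not literally cover it: that theorem concerns torus bundles over a circle, i.e.\ base a $2$-torus, whereas $X(f)'/\setS^1$ has genus $g\ge 1$ arbitrary. The fact is true and standard, but in a self-contained proof you should either carry out the section-patching computation (as the paper does) or execute your proposed alternative of evaluating $N\bigl(j_0(a_0)-j_1(a_1)\bigr)$ against the Seifert-invariant formula for $e$. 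With that step supplied, the proof is complete.
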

\begin{proof}
 See Section~\ref{seifert}.
\end{proof}

\begin{remark}
 We note that (by the construction) $X(f)/\mathbb S^1$ has genus $g > 0$ and hence is not a sphere. It follows that the $\mathbb S^1$ action on $X$ and $X(f)$
 is unique up to isomorphism; see
 \cite[Theorem 2.3]{Hatcher2000}.
\end{remark}

\begin{remark} \label{remarknsm}
  Even if the base $B$ is non-orientable, the group $\partial_*(X,\partial X)$ is still isomorphic to $\mathbb Z^2$.
  However, in this case, $\partial_*(X,\partial X)$ is spanned
  by $(b_0,b_1)$ and $(2b_0,0)$. It follows that 
  no multiple of $a_0$ can be parallel transported along $X$ and that the parallel transport is not unique. 
\end{remark}

\subsection{The case of equivariant filling}

Theorem~\ref{mth} shows that the Euler number of a Seifert manifold can be computed in terms of the parallel transport along this manifold.
But conversely, if we know the Euler number and the orders of exceptional orbits of a Seifert manifold, we also know how the parallel transport acts on homology cycles.
In applications the orders of exceptional orbits are often known.
In order to compute the Euler number one may then use the following result.

\begin{theorem} \label{generalthm}
Let $M$ be a compact oriented $4$-manifold that admits an effective circle action. Assume that the action is fixed-point free on the boundary
$\partial M$ and has only finitely many fixed points $p_1, \ldots, p_{\ell}$ in the interior. Then
\begin{equation*}
e(\partial M) = \sum\limits_{k=1}^{\ell} \dfrac{1}{m_k n_k},
\end{equation*} 
where $(m_k,n_k)$ are isotropy weights of the fixed points $p_k$.
\end{theorem}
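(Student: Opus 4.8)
The plan is to reduce the global statement to a sum of local contributions by an excision/additivity argument on the Euler number, and then to compute each local contribution around a fixed point $p_k$. First I would remove from $M$ a small invariant open ball $B_k$ around each fixed point $p_k$; by the slice theorem the circle action near $p_k$ is linear and equivalent to the action $(t,z,w)\mapsto(e^{im_kt}z,e^{-in_kt}w)$ on $\mathbb{C}^2$, with $(m_k,n_k)$ the isotropy weights, so $\partial B_k$ is a small $3$-sphere $S^3_{\varepsilon_k}$ carrying the Seifert fibration of that weighted action. After excision we obtain a compact $4$-manifold $M' = M \setminus \bigcup_k B_k$ on which the circle action is fixed-point free; its boundary is $\partial M \sqcup \bigsqcup_k S^3_{\varepsilon_k}$ (with appropriate orientations), and the Euler number — which by the construction preceding Definition~\ref{dpt0} and by Theorem~\ref{mth} can be viewed as the obstruction class evaluated against the base, or equivalently computed via Stokes on the curvature of an invariant connection — is additive over the boundary components: $e(\partial M) = \sum_k e(S^3_{\varepsilon_k})$, where the signs are absorbed by orienting each $S^3_{\varepsilon_k}$ as the boundary of $B_k$.

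The second step is the local computation: for the weighted $S^1$ action on $S^3$ with coprime-in-pairs weights one has $e(S^3_{\varepsilon_k}) = 1/(m_kn_k)$. This is exactly the assertion already recorded in Remark~\ref{remarku}, and I would prove it directly: the subgroup $\mathbb{Z}_{m_kn_k}$ acts so that the quotient $S^3/\mathbb{Z}_{m_kn_k}$ is a principal circle bundle over $S^2$, namely a lens space fibering over $S^2$, whose Euler (Chern) number is $1$ (it is, up to orientation, the Hopf bundle after the reduction); dividing by $N_k = m_kn_k$ per the definition of the Euler number of a Seifert fibration gives $1/(m_kn_k)$. If the weights are not pairwise coprime one first factors out the common gcd, which only rescales consistently and does not affect the final product in the denominator. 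Summing over $k$ yields the formula. I would also note the orientation bookkeeping: $M$ is oriented, each $B_k$ inherits the orientation, and $S^3_{\varepsilon_k}$ as $\partial B_k$ gets the boundary orientation, which is the one for which the local Euler number is $+1/(m_kn_k)$; the outer boundary $\partial M$ then carries the sign that makes the additivity come out as stated.

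The main obstacle is making the additivity of the Euler number over disjoint boundary components rigorous in the Seifert (orbifold) setting, since $e$ was defined only for closed Seifert manifolds and for Seifert manifolds with torus boundary via Theorem~\ref{mth}, not a priori for a manifold whose boundary is a disjoint union of a torus and several $3$-spheres. The clean way around this is to work with an invariant connection $1$-form $\theta$ on the free part of $M'$ (which exists since the action is locally free there after removing a neighborhood of any exceptional-orbit strata, handled by the same reduction-by-$\mathbb{Z}_N$ trick used in Definition~\ref{dpt0}ff.), whose curvature $d\theta$ descends to a $2$-form on the base orbifold; then $e(\partial M) - \sum_k e(S^3_{\varepsilon_k})$ is the integral of $d\theta$ over the base orbifold $B' = M'/S^1$ pushed to its boundary, i.e.\ an application of Stokes' theorem on $B'$, giving $0$ because $d(d\theta)=0$. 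This is precisely the Stokes' argument sketched for the $1{:}(-2)$ example in Section~\ref{sec/intrresonance} and in Remark~\ref{remark/focus}, now carried out once in the $4$-manifold. Once additivity is in hand, the theorem follows immediately from the local model.
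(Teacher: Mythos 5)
Your proposal is correct and follows essentially the same route as the paper: excise invariant $4$-balls around the fixed points, pass to the $\mathbb{Z}_N$-quotient to obtain a genuine principal circle bundle over the quotient surface, and conclude from the vanishing of the Euler class paired with the total boundary of that surface (your Stokes/curvature formulation is the de Rham version of the paper's cohomological pairing $\langle {\bf e}_Y, \partial Y\rangle = 0$), combined with the local computation $e(S^3_{\varepsilon_k}) = 1/(m_k n_k)$ from Remark~\ref{remarku}.
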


\begin{remark}
Recall that near each fixed point $p_k$ the $\setS^1$ action can be linearized as
\begin{equation} \label{laction}
(t, z,w) \mapsto (e^{im_kt}z, e^{-in_kt}w), \ t \in \setS^1,
\end{equation}  
in appropriate coordinates $(z,w)$ that are positive with respect 
to the orientation of $M$. The isotropy weights $m_k$ and $n_k$ are co-prime integers. In particular, none of them
is equal to zero.
\end{remark}

\begin{remark}
In the above theorem neither $M$ nor $\partial M$ are assumed to be connected. The orientation on $\partial M$ is induced
by $M$. 
\end{remark}

\renewenvironment{proof}{\textit{Proof of Theorem~\ref{generalthm}.}}{\qed}

\begin{proof} 
Eq.~\eqref{laction} implies that for each fixed point $p_k$ there exists a small closed $4$-ball $B_k \ni p_k$ invariant under the action. 
Denote by $Z$ the manifold
$Z = M \setminus \bigcup_{k=1}^{\ell} B_k.$
Let 
$N$ be a common multiple of the orders of all exceptional orbits in $M$ and $\mathbb Z_N$ be the order $N$ subgroup of the acting
circle $\mathbb S^1$. Set
\begin{equation*}
X = Z / \setZ_N \ \ \ \mbox{ and }  \ \ \ Y = Z / \setS^1.
\end{equation*}
Denote by $\Pr \colon X \to Y$ the natural projection that identifies the orbits of the $\setS^1/\setZ_N$ action.
By the construction the triple $(X, Y, \Pr)$ is a principal circle bundle.

Because of the slice theorem \cite{Audin2004} the spaces $X$ and $Y$ are topological manifolds (with boundaries).  The boundary $\partial Y$ is a disjoint union of
the closed $2$-manifold $B = \partial M / \setS^1$ and the $2$-spheres $S^2_k = \partial B_k / \setS^1.$ 
Let $i_B \colon B \to Y$ and $i_k \colon S^2_k \to Y$ be the corresponding inclusions.

Denote by ${\bf e}_Y \in H^2(Y)$ the Euler class of the circle bundle $(X, Y, \Pr)$. 
By the functoriality $i^*_B({\bf e}_Y)$ and $i^*_k({\bf e}_Y)$ are the Euler classes of the circle bundles
$(\Pr^{-1}(B), B, \Pr)$ and $(\Pr^{-1}(S^2_k), S^2_k, \Pr)$, respectively. Hence
\begin{equation*}
\langle {\bf e}_Y, i_B(B) \rangle  = \langle i^{*}_B({\bf e}_Y),B \rangle = N e(\partial M)
\end{equation*}
 and analogously 
\begin{equation*}
\langle {\bf e}_Y, i_k(S^2_k) \rangle = \langle i^*_k{\bf e}_Y, S^2_k \rangle  = \dfrac{N}{m_k n_k}.
\end{equation*} 
The equality
\begin{equation*}
\langle {\bf e}_Y, i_B(B) - \sum_{k=1}^{\ell} i_k(S^2_k) \rangle = \langle {\bf e}_Y, \partial Y \rangle = 0
\end{equation*}   
completes the proof.
\end{proof}
\renewenvironment{proof}{\textit{Proof.}}{\qed}

\section{Monodromy in integrable systems} \label{sec/generalizedmonodromy}

\subsection{Historical and mathematical background} \label{mb}

\textit{Standard monodromy} was introduced by Duistermaat in \cite{Duistermaat1980} as an obstruction to the existence of global action coordinates in
integrable Hamiltonian systems. 
Since the early work \cite{Duistermaat1980},  non-trivial monodromy has been observed
in
the (quadratic) spherical pendulum
(\cite{Bates1993, Efstathiou2005}) \cite{Duistermaat1980, Cushman2015}, the Lagrange top \cite{Cushman1985}, the
Hamiltonian Hopf bifurcation \cite{Duistermaat1998}, the champagne bottle
\cite{Bates1991}, the
coupled angular momenta \cite{Sadovskii1999}, 
the hydrogen atom in crossed fields \cite{Cushman2000},
the two-centers problem \cite{Waalkens2003, Waalkens2004} and many other systems.
A common aspect of most of these systems is the presence of \emph{focus-focus} singular points of the  Lagrangian fibration. It is known that the presence of such singular points is 
sufficient for the monodromy to be nontrivial in the general case (\emph{geometric monodromy theorem}) \cite{Lerman1994, Matveev1996, Zung1997}.

The  definition of standard monodromy in the sense of Duistermaat \cite{Duistermaat1980} reads as follows.
Consider a Lagrangian $n$-torus bundle $F \colon M \to R$ over a $n$-dimensional manifold $R$. By definition, this means
that $M$ is a symplectic manifold and that each fiber $F^{-1}(\xi)$ is a Lagrangian submanifold of $M$.
\begin{remark}In the context of integrable systems
$R \subset \mathbb R^n$ and $F$ is given by $n$ Poisson commuting functions. Conversely, every chart $(V, \chi)$ of $R$ gives rise to an integrable system
on $F^{-1}(V) \subset M$ with the integral map $F \circ \chi$.
\end{remark}

There is a well-defined action of the fibers of $\Pr \colon T^{*}R \to R$ on the fibers of $F \colon M \to R$, which, in every chart $(V, \chi)$, is given by the flow of
  $n$ Poisson commuting functions $F \circ \chi$; see \cite{Duistermaat1980} and \cite{Lukina2008}. For each $\xi \in R$ the stabilizer of the
  $\mathbb R^n_\xi = \Pr^{-1}(\xi)$ action on $T^n_\xi = F^{-1}(\xi)$ is a lattice $\mathbb Z^n_\xi \subset \mathbb R^n_\xi$. 
  The union of these lattices
  covers the base manifold $R$:
  $$
  \Pr \colon \bigcup \mathbb Z^n_\xi \to R.
  $$
  \begin{definition}
  The (standard) monodromy of the Lagrangian  $n$-torus bundle $F \colon M \to R$ is defined as the representation
  \begin{align*}
 \rho \colon \pi_1(R,\xi_0) \to \textup{Aut}\,\mathbb Z^n_{\xi_0} \simeq \mathrm{GL}(n, \mathbb Z)
\end{align*}
of the fundamental group $\pi_1(R,\xi_0)$ of the base $R$ in the  group of automorphisms of $\mathbb Z^n_{\xi_0}$. For each element $[\gamma] \in \pi_1(R,\xi_0)$,
the automorphism $\rho([\gamma])$ is called the (standard) monodromy along $\gamma$
\end{definition}

\begin{remark} \label{Duistermaatpt}
 We note that the lattices $\mathbb Z^n_\xi$ give a unique local identification of cotangent spaces of $T^{*}R$, that is, a flat connection.
  Thus, standard monodromy is given by the parallel transport (holonomy) of this connection.
\end{remark}

The following lemma shows that, in the case of Lagrangian torus bundles, the parallel transport 
in the sense of Definition~\ref{dpt0} coincides with the  parallel transport of the flat connection, given in
Remark~\ref{Duistermaatpt}. 

\begin{lemma}
 Let $\gamma = \gamma(t)$ be a continuous curve and
 \begin{equation} \label{manx}
 X = \{(x,t) \in M \times [0,1] \colon F(x) = \gamma(t)\}.
\end{equation}
Then $(\alpha_0, -\alpha_1) \in \partial_{*}(H_2(X, \partial X))$
if and only if the
cycle $\alpha_1$ is a parallel transport of $\alpha_0$ in the sense of Remark~\ref{Duistermaatpt}.
\end{lemma}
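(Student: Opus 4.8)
The plan is to reduce the statement to a short computation with the long exact sequence of the pair $(X,\partial X)$, once Duistermaat's period-lattice connection has been recognized as the topological monodromy of the torus bundle. First I would observe that the manifold $X$ defined in~\eqref{manx} is exactly the pullback of the fiber bundle $F\colon M\to R$ along the continuous curve $\gamma\colon[0,1]\to R$. Since $[0,1]$ is contractible, this pullback bundle is trivial: $X\cong T^n\times[0,1]$. In particular $\partial X=X_0\sqcup X_1$ with $X_j=F^{-1}(\gamma(j))$, and the inclusions $i_j\colon X_j\hookrightarrow X$ are homotopy equivalences, so $i_{j*}\colon H_1(X_j)\to H_1(X)$ are isomorphisms. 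Set $\Phi=i_{1*}^{-1}\circ i_{0*}\colon H_1(X_0)\to H_1(X_1)$; by construction $\Phi$ is the transport of homology classes of the fibers given by the trivial bundle structure of $X$ over $[0,1]$.

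The next step is to check that this $\Phi$ coincides with Duistermaat's parallel transport of Remark~\ref{Duistermaatpt}. For each $\xi\in R$ the stabilizer lattice $\mathbb Z^n_\xi\subset\mathbb R^n_\xi$ is canonically isomorphic to $H_1(F^{-1}(\xi);\mathbb Z)$ via the map sending a lattice vector to the homology class of its orbit under the $\mathbb R^n_\xi$-action. Under this natural isomorphism the flat connection of Remark~\ref{Duistermaatpt}, characterized by the local constancy of the lattices $\mathbb Z^n_\xi$, becomes the flat connection on $\xi\mapsto H_1(F^{-1}(\xi);\mathbb Z)$ characterized by the local constancy of integral homology classes --- that is, the topological monodromy connection of the torus bundle $F$, whose parallel transport along $\gamma$ is precisely $\Phi$. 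Hence ``$\alpha_1$ is a parallel transport of $\alpha_0$ in the sense of Remark~\ref{Duistermaatpt}'' is equivalent to $\alpha_1=\Phi(\alpha_0)$.

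It then remains to identify $\partial_*(H_2(X,\partial X))$. Exactness of
\[
H_2(X,\partial X)\xrightarrow{\ \partial_*\ }H_1(\partial X)\xrightarrow{\ \iota_*\ }H_1(X),
\]
with $\iota\colon\partial X\hookrightarrow X$ the inclusion, gives $\partial_*(H_2(X,\partial X))=\ker\iota_*$. Writing $H_1(\partial X)=H_1(X_0)\oplus H_1(X_1)$ we have $\iota_*(\alpha,\beta)=i_{0*}\alpha+i_{1*}\beta$, so $(\alpha_0,-\alpha_1)\in\ker\iota_*$ if and only if $i_{0*}\alpha_0=i_{1*}\alpha_1$, i.e.\ if and only if $\alpha_1=\Phi(\alpha_0)$. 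Combined with the previous paragraph, this gives the claimed equivalence.

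I do not expect a real obstacle: the only non-formal ingredient is the identification of the two connections in the second step, and that follows from the naturality of the stabilizer-lattice construction. As a by-product the computation also shows that for Lagrangian torus bundles the parallel transport of Definition~\ref{dpt0} is unique, since $\ker\iota_*=\{(\alpha,-\Phi(\alpha))\mid\alpha\in H_1(X_0)\}$ is the graph of $-\Phi$; and the argument is insensitive to the value of $n$.
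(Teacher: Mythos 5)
Your argument is correct, and it reaches the paper's conclusion by a somewhat different route. The paper's proof smooths $\gamma$, subdivides $[0,1]$ finely enough that each piece lies in an action--angle chart, invokes the Arnol'd--Liouville theorem to identify the two notions of parallel transport on each piece, and concatenates; the concatenation step (that the parallel transport of Definition~\ref{dpt0} composes along a subdivision) is left implicit there. You instead trivialize the pullback bundle $X\cong T^n\times[0,1]$ once and for all, observe via the exact sequence that $\partial_{*}(H_2(X,\partial X))=\ker\iota_{*}$ is precisely the graph of $-\Phi$ with $\Phi=i_{1*}^{-1}\circ i_{0*}$, and then identify $\Phi$ with Duistermaat's transport by recognizing the period-lattice connection as the Gauss--Manin connection on $\xi\mapsto H_1(F^{-1}(\xi);\setZ)$. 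That second step is where the Arnol'd--Liouville theorem enters your version as well: the natural isomorphism $\setZ^n_\xi\cong H_1(F^{-1}(\xi);\setZ)$ intertwines the two flat structures because both are characterized by local constancy in an action--angle chart, so the essential analytic input is the same as in the paper. What your organization buys is that the homological half of the equivalence becomes a purely formal exact-sequence computation, valid for any continuous $\gamma$ without smoothing or subdivision, and it yields as a by-product the uniqueness of the parallel transport of Definition~\ref{dpt0} for Lagrangian torus bundles.
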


\begin{proof}
 By homotopy invariance, we can assume that $\gamma$  is smooth.
 Let $(0 = t_0 \le \ldots \le t_n = 1)$ be a sufficiently fine partition
 of the segment $[0,1]$. Then, for each $i$, we have
 $$\gamma([t_i, t_{i+1}]) \subset V_i,$$ 
 where $V_i$ is a small open neighborhood $V_i \subset R$. By the Arnol'd-Liouville theorem \cite{Arnold1968}, the two notions of parallel transport
 along $\gamma|_{[t_i,t_{i+1}]}$ coincide. The result follows.
\end{proof}

\begin{remark}
 Let $\gamma$ be a simple curve. If $\gamma(0) \ne \gamma(1)$, then the manifold $X$ in \eqref{manx} is homeomorphic to $F^{-1}(\gamma)$. If 
 $\gamma(0) = \gamma(1) = \xi_0$, then the manifold $X$ is
 obtained from $F^{-1}(\gamma)$ by cutting along the fiber
 $F^{-1}(\xi_0)$.
\end{remark}

\textit{Fractional monodromy} was introduced in \cite{Nekhoroshev2006} as a generalization of standard monodromy in the sense of Duistermaat 
from Lagrangian torus bundles to singular Lagrangian fibrations. 
Since the pioneering work \cite{Nekhoroshev2006}, non-trivial fractional monodromy has been demonstrated in several integrable Hamiltonian
systems \cite{Nekhoroshev2007, Giacobbe2008, Sugny2008, Efstathiou2013}. 

What has been missing until now for fractional monodromy
 is a result that associates fractional monodromy to certain 
singular points of the Lagrangian fibration in the same spirit as the geometric monodromy theorem associates standard monodromy to focus-focus singular points.
In the next subsection~\ref{application} we give such a result for fractional monodromy in the case when
the fibration is invariant under an effective circle action. Specifically, we show that  fractional
monodromy is completely determined by the singularities of the corresponding circle action and that, in certain cases,
fractional monodromy can be computed in terms of the fixed points of this action,
just as  standard monodromy
\cite{Efstathiou2017}.

The  definition of fractional monodromy (in the sense of \cite{Nekhoroshev2006} and \cite{Efstathiou2013}) reads as follows.
Consider a singular Lagrangian fibration $F \colon M \to R$ over a $n$-dimensional manifold $R$, given by a proper integral map $F$.  
Locally, such a fibration gives an integrable Hamiltonian system.
Let $\gamma = \gamma(t)$ be a continuous closed curve in $F(M)$ such that the space
\begin{equation*}
 X = \{(x,t) \in M \times [0,1] \colon F(x) = \gamma(t)\}
\end{equation*}
is connected and such that  $\partial X = X_0 \sqcup X_1$
is a disjoint union of two regular tori $X_0 = F^{-1}(\gamma(0))$ and $X_1 = F^{-1}(\gamma(1)).$ Set
$$H^0_1 = \{\alpha_0 \in H_1(X_0) \mid \alpha_0 \mbox{ can be parallel transported along } X\}.$$

\begin{definition} \label{hfm}
If the parallel transport along $X$ defines an automorphism of the group $H^0_1$, then this automorphism is called
\textit{fractional monodromy along $\gamma$.}
\end{definition}

\begin{remark}
As was mentioned in Subsection~\ref{sec/intrresonance}, in the singular case the notion of parallel transport in the sense of Remark~\ref{Duistermaatpt} is not defined.
Instead, the more general Definition~\ref{dpt0} is used.
\end{remark}

\subsection{Applications to integrable systems} \label{application}
Consider a singular Lagrangian fibration $F \colon M \to R$ over a $2$-dimensional manifold $R$. Assume that the map  $F$ is proper
and invariant under an effective $\mathbb S^1$ action.  
Take a simple closed curve $\gamma = \gamma(t)$ in $F(M)$ that satisfies
the following regularity conditions:
\begin{enumerate}[(i)]
\item the fiber $F^{-1}(\gamma(0))$ is regular and connected;
\item  the $\setS^1$ action is fixed-point free on the preimage  $E = F^{-1}(\gamma)$;
\item the preimage $E$ is a closed oriented connected submanifold of $M$.
\end{enumerate}

\begin{remark}
 Note that, generally speaking, $F^{-1}(\gamma(t)), \ t \in [0,1],$ is neither smooth nor connected. 
\end{remark}

From the regularity conditions it follows that
\begin{equation*}
 X = \{(x,t) \in M \times [0,1] \colon F(x) = \gamma(t)\}
\end{equation*}
is a Seifert manifold with an orientable base. This manifold can be obtained from the Seifert manifold 
$E = F^{-1}(\gamma)$ by cutting along the fiber $F^{-1}(\gamma(0))$. We note that the boundary 
$\partial X = X_0 \sqcup X_1$ 
is a disjoint union of two tori.

Let $e(E)$ be the Euler number of $E$ and $N$ denote the least common multiple of $n_j$ -- the orders of the exceptional orbits.
  Take a basis $(a,b)$ of the homology group $H_1(X_0) \simeq \setZ^2$, where $b$ is given by any orbit of the $\setS^1$ action.
  Then the following theorem holds.
  
  \begin{theorem} \label{corexistence}
Fractional monodromy along $\gamma$ is defined. Moreover, $(Na,b)$ form a basis
 of the parallel transport group $H^0_1$  
 and the corresponding isomorphism has the form $b \mapsto b$ and $Na \mapsto Na + kb$, where $k \in \setZ$ is given by $k = Ne(E).$ 
\end{theorem}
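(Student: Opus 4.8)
The plan is to reduce Theorem~\ref{corexistence} directly to Theorem~\ref{mth} applied to the Seifert manifold $X$, after carefully matching up the objects appearing in the two statements. First I would observe that the regularity conditions (i)--(iii) guarantee exactly the hypotheses of the setting of Subsection~\ref{mresult}: $X$ is a compact oriented Seifert manifold whose boundary $\partial X = X_0 \sqcup X_1$ consists of two $2$-tori (this is already asserted in the text just above the theorem), and $E = X(f)$ is the closed Seifert manifold obtained by gluing $X_0$ to $X_1$ via the natural identification $f$ coming from the fact that both are $F^{-1}(\gamma(0))$. The cycle $b \in H_1(X_0)$ is the class of an $\mathbb{S}^1$-orbit, which is precisely the normalization required in Theorem~\ref{mth}. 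Hence Theorem~\ref{mth} applies verbatim and gives: the parallel transport along $X$ is unique, only linear combinations of $Na$ and $b$ can be parallel transported, and under parallel transport $Na \mapsto Na_1 + k b_1$, $b \mapsto b_1$, with $e(E) = e(f) = k/N$.

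Next I would translate this into the language of Definition~\ref{hfm}. By definition $H^0_1 = \{\alpha_0 \in H_1(X_0) : \alpha_0 \text{ can be parallel transported along } X\}$; Theorem~\ref{mth} identifies this subgroup as the lattice spanned by $Na$ and $b$, so $(Na, b)$ is a basis of $H^0_1$. To invoke Definition~\ref{hfm} I must check that parallel transport restricted to $H^0_1$ is an \emph{automorphism} of $H^0_1$, i.e.\ that it is an invertible map $H^0_1 \to H^0_1$. For this I would use the fiber-preserving identification $f_\star \colon H_1(X_0) \to H_1(X_1)$ to transport everything back to $H_1(X_0)$: under this identification the parallel transport sends $Na \mapsto Na + kb$ and $b \mapsto b$, which is the matrix $\begin{pmatrix} 1 & 0 \\ k & 1\end{pmatrix}$ in the basis $(Na, b)$ (or $\begin{pmatrix} 1 & k/N \\ 0 & 1 \end{pmatrix}$ in $(a,b)$), manifestly an element of $\mathrm{SL}(2,\mathbb{Z})$ preserving the sublattice $H^0_1$, hence an automorphism. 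Therefore fractional monodromy along $\gamma$ is defined, and it has exactly the stated form with $k = Ne(E) \in \mathbb{Z}$.

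The one genuinely non-routine point — the step I expect to be the main obstacle — is justifying that $f_\star$ is well defined and orientation/fiber preserving, i.e.\ that cutting $E$ along $F^{-1}(\gamma(0))$ and regluing reproduces $E$ as $X(f)$ with $f$ isotopic to the identity in the relevant sense, so that the integer $k$ extracted from Theorem~\ref{mth} is unambiguous and equals $N e(E)$ rather than $Ne(E)$ shifted by some artifact of the cut. This requires knowing that $F^{-1}(\gamma(0))$ is a regular torus on which the $\mathbb{S}^1$-action is free (condition (i) together with condition (ii)), so that a tubular neighborhood of it in $E$ is $T^2 \times (-\epsilon,\epsilon)$ equivariantly, making the cut-and-reglue operation canonical up to isotopy; the orientation on $E$ from condition (iii) fixes the sign. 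Once this is in place, the Euler number is an invariant of $E$ alone and the formula $k = Ne(E)$ is immediate from Theorem~\ref{mth}. Everything else is a direct quotation of Theorem~\ref{mth} and bookkeeping with the long exact sequence already encoded in Definition~\ref{dpt0}.
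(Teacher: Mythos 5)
Your proposal is correct and follows essentially the same route as the paper, which proves Theorem~\ref{corexistence} simply by invoking Theorem~\ref{mth}; your additional care in matching the setup (identifying $E$ with $X(f)$, checking that parallel transport restricted to $H^0_1$ is an automorphism, and noting that the gluing map is canonical up to isotopy) fills in details the paper leaves implicit but does not change the argument.
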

\begin{proof}
 Follows directly from Theorem~\ref{mth}.
\end{proof}

\begin{remark}Theorem~\ref{corexistence} tells us that
 the orders of the exceptional orbits $n_j$ and the Euler number $e(E)$
 completely determine fractional monodromy along $\gamma$. 
 \end{remark}
 
 \begin{remark} 
Let $i_0 \colon X_0 \to X$ and $i_1 \colon X_1 \to X$ denote the corresponding inclusions. Observe that, in our case, the composition
\begin{equation*}
 i^{-1}_1 \circ i_0 \colon H_1(X_0, \mathbb Q) \to H_1(X_0, \mathbb Q)
\end{equation*}
gives an automorphism of the first homology group $H_1(X_0, \mathbb Q)$. In a basis of $H_1(X_0,\setZ)$ the
isomorphism $i^{-1}_1 \circ i_0$ is written as $2 \times 2$ matrix with rational coefficients, called the
 \textit{matrix of fractional monodromy} \cite{Tonkonog2013}. 
 We have thus proved that
  in a basis $(a,b)$ of $H_1(M_0)$, where $b$ corresponds to the $\setS^1$ action, the fractional monodromy matrix has the form
  $$ \begin{pmatrix}
            1 & e(E) = k/N \\
            0 & 1
           \end{pmatrix} \in \textup{SL}(2,\mathbb Q).$$
\end{remark}

In certain cases we can easily compute the parameter $e(E) = k/N,$ as is explained in the following theorem.
 
\begin{theorem} \label{simplef}
Assume that $\gamma$ bounds a compact $2$-manifold $U \subset R$ such that $F^{-1}(U)$ has only finitely many fixed points $p_1, \ldots, p_l$
of the $\setS^1$ action. Then
\begin{equation*}
e(E) = \sum\limits_{k=1}^l \dfrac{1}{m_k n_k},
\end{equation*} 
where $(m_k,n_k)$ are the isotropy weights of the fixed points $p_k.$
\end{theorem}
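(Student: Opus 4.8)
The plan is to deduce Theorem~\ref{simplef} from Theorem~\ref{generalthm} by setting up an equivariant filling of the Seifert manifold $E$. The hypothesis gives us a compact $2$-manifold $U \subset R$ with $\partial U = \gamma$, so the natural candidate for the filling is $M' = F^{-1}(U)$. First I would check that $M'$ is a compact oriented $4$-manifold carrying an effective $\mathbb S^1$ action (inherited from $M$) whose fixed-point set inside $M'$ is exactly the finite set $\{p_1,\dots,p_\ell\}$; here one uses properness of $F$ for compactness, and the hypothesis directly for the finiteness of fixed points. The boundary is $\partial M' = F^{-1}(\gamma) = E$, and the regularity conditions (i)--(iii) imposed before Theorem~\ref{corexistence}, together with the assumption that $\gamma$ is chosen in the regular part appropriately, guarantee the action is fixed-point free on $\partial M' = E$, which is precisely the hypothesis needed to apply Theorem~\ref{generalthm}.

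The core step is then a direct invocation: Theorem~\ref{generalthm} applied to $M = M'$ yields
\begin{equation*}
e(\partial M') = \sum_{k=1}^{\ell} \frac{1}{m_k n_k},
\end{equation*}
where $(m_k,n_k)$ are the isotropy weights at $p_k$. Since $\partial M' = E$ with the Seifert structure coming from the same $\mathbb S^1$ action, $e(\partial M') = e(E)$ by definition of the Euler number of a Seifert fibration, and the formula follows.

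The main obstacle is purely technical rather than conceptual: one must verify carefully that $F^{-1}(U)$ is genuinely a manifold with boundary and that its orientation is compatible with the one used to define $e(E)$. Since $U$ is a $2$-manifold with boundary $\gamma$ sitting inside the base $R$ (which need not consist only of regular values — $U$ may contain critical values of $F$), the preimage $F^{-1}(U)$ need not be smooth in the interior where singular fibers occur; however, the only places where the topology could degenerate are the fixed points of the circle action, and near each $p_k$ the action linearizes as in \eqref{laction}, so a small invariant $4$-ball around each $p_k$ has a well-behaved boundary $S^3$ — exactly the structure exploited in the proof of Theorem~\ref{generalthm}. One should also note that $U$ can be taken so that $F$ is transverse enough near $\gamma = \partial U$ that $\partial F^{-1}(U) = F^{-1}(\gamma) = E$ as oriented manifolds, the orientation on $E$ being induced from $M'$ exactly as in the statement of Theorem~\ref{generalthm}. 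Once these compatibility checks are in place, the theorem is immediate.

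\begin{proof}
Set $M' = F^{-1}(U)$. Since $F$ is proper and $U$ is compact, $M'$ is a compact oriented $4$-manifold; it inherits the effective $\mathbb S^1$ action from $M$. By the regularity conditions on $\gamma$, the action is fixed-point free on $\partial M' = F^{-1}(\gamma) = E$, and by hypothesis it has only the finitely many fixed points $p_1,\dots,p_\ell$ in the interior, with isotropy weights $(m_k,n_k)$. Applying Theorem~\ref{generalthm} to $M'$ gives
\begin{equation*}
e(E) = e(\partial M') = \sum_{k=1}^{\ell} \frac{1}{m_k n_k},
\end{equation*}
as claimed.
\end{proof}
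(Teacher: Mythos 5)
Your proposal is correct and follows exactly the route the paper takes: the paper's proof of Theorem~\ref{simplef} is simply ``Follows directly from Theorem~\ref{generalthm}.'', i.e.\ applying Theorem~\ref{generalthm} to the filling $F^{-1}(U)$. Your additional verification that $F^{-1}(U)$ is a compact oriented $4$-manifold with boundary $E$ on which the action is fixed-point free is the right set of checks and is consistent with the regularity conditions the paper imposes before Theorem~\ref{corexistence}.
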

\begin{proof}
Follows directly from Theorem~\ref{generalthm}.
\end{proof}

\begin{remark}
For the case of standard monodromy, Theorem~\ref{simplef} agrees with Theorem 2.2 from \cite{Efstathiou2017}, which considers
only the case $m_k = 1$ and $n_k = \pm 1$ and which
states that the monodromy parameter is given by the sum of positive singular points ($n_k = 1$) of the
Hamiltonian $\setS^1$ action minus the number of negative singular points ($n_k = -1$).
\end{remark}

\begin{remark}
Theorem~\ref{generalthm}, when applied to the context of Lagrangian fibrations, tells us more than Theorem~\ref{simplef}.
Indeed,  consider smooth curves 
$\gamma_1$ and $\gamma_2$ that are cobordant in $R$. Theorem~\ref{generalthm} allows to compute
$$e(F^{-1}(\gamma_1)) - e(F^{-1}(\gamma_2)),$$
which is the difference 
between the Euler numbers of $F^{-1}(\gamma_1)$ and $F^{-1}(\gamma_2)$. This difference shows how far is
fractional monodromy along $\gamma_1$ from fractional monodromy along $\gamma_2$. Theorem~\ref{simplef} is recovered  when 
$\gamma_1$ is cobordant to zero.
\end{remark}

Combining Theorems~\ref{corexistence} and \ref{simplef} together one can compute fractional monodromy in various integrable Hamiltonian systems.
We illustrate this in the following Section~\ref{examples}.

\section{Examples} \label{examples}
 
\subsection{Resonant systems} \label{resonance}

In this section we consider $m$:$(-n)$ \textit{resonant systems} \cite{Nekhoroshev2007, Sugny2008, Schmidt2010, Efstathiou2013}, which are local 
models for integrable $2$ degrees of freedom systems with an effective Hamiltonian $\setS^1$ action. Our approach to these systems is very general. Moreover, it clarifies a question 
posed in \cite[Problem 61]{Bolsinov2012}, cf. Remark~\ref{remark/mark}.

\begin{definition}
Consider $\mathbb R^4$ with the canonical symplectic structure $dq \wedge dp$.
 An integrable Hamiltonian system 
 $$(\setR^4, dq \wedge dp, F = (J,H))$$
 is called a $m$:$(-n)$ \textit{resonant system} if the function
 $J$ is the $m$:$(-n)$ \textit{oscillator}
\begin{equation*}
  J = \dfrac{m}{2}(q_1^2 + p_1^2) - \dfrac{n}{2}(q_2^2 + p_2^2).
\end{equation*}
Here $m$ and $n$ be relatively prime integers with $m > 0$.
\end{definition}
We note that for every $m$:$(-n)$ resonant system there exists an associated effective $\setS^1$ action that preserves the integral map $F = (J,H)$.
Indeed, the induced Hamiltonian flow of $J$ is periodic. In coordinates $z = p_1 + iq_1$ and $w = p_2+ iq_2$ the action has the form
\begin{equation} \label{rescaction}
(t,z,w) \mapsto (e^{imt}z, e^{-int}w), \ t \in \setS^1.
\end{equation}

Assume that the integral map $F = (J,H)$ is proper. 
Let
$\gamma = (J(t),H(t))$ be a simple closed curve satisfying
the assumptions (i)-(iii) from Section~\ref{application}.
\begin{remark}We note that, in this case, the assumptions (i)-(iii) can be reduced to the following more easily verifiable conditions
\begin{enumerate}[(i')]
\item the fiber $F^{-1}(\gamma(0))$ is regular and connected;
\item the preimage $E = F^{-1}(\gamma)$ is connected;
\item 
for all $t$ the following holds:
$H'(t)dJ - J'(t)dH \ne 0. $
\end{enumerate}
\begin{proof}
Under (i)-(iii), the space $E = F^{-1}(\gamma)$
is the boundary
of the compact oriented manifold $F^{-1}(U)$, where $U$ is the $2$-disk bounded by $\gamma$. Hence, $E$ is itself compact and oriented. It is left to note that
the $\mathbb S^1$ action is fixed-point free on $E$.
\end{proof}
\end{remark}

Let $(a,b)$ be a basis of the integer homology group $H_1(F^{-1}(\gamma(0))$ such that $b$ is given by any orbit of the $\setS^1$ action.
There is the following result (cf. \cite{Efstathiou2013}).
 
\begin{theorem} \label{frmresonance} 
Let $U$ be a $2$-disk in the $(J,H)$-plane such that $\partial U = \gamma$. 
Case 1: $(0,0) \in U$. The parallel transport group is spanned by $mna$ and $b$.
The matrix of fractional monodromy has the form
 $$ \begin{pmatrix}
            1 & 1/mn \\
            0 & 1
           \end{pmatrix} \in \textup{SL}(2, \mathbb Q).$$
Case 2: $(0,0) \notin U$. The parallel transport group $H^0_1$ is spanned by $Na$ and $b,$ \ \ \  where $N \in \{1,m,n,mn\}$. The matrix of fractional monodromy is trivial. 
\end{theorem}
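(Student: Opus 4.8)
The plan is to derive Theorem~\ref{frmresonance} as a direct consequence of Theorems~\ref{corexistence} and \ref{simplef}, with the only genuine work being the determination of the orders $n_j$ of the exceptional orbits and the Euler number $e(E)$ in each of the two cases. First I would recall that the $\setS^1$ action \eqref{rescaction} has isotropy subgroups that can be read off directly: a point $(z,w)$ with $z \neq 0$, $w = 0$ has isotropy $\setZ_m$ (the stabilizer of $e^{imt}z = z$), a point with $z = 0$, $w \neq 0$ has isotropy $\setZ_n$, and a point with $z,w$ both nonzero is on a principal orbit. Hence the only possible exceptional orbits inside $F^{-1}(\gamma)$ are the (at most one each) short orbits lying in the planes $\{z = 0\}$ and $\{w = 0\}$, with orders dividing $m$ and $n$ respectively. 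The number $N$ from Theorem~\ref{corexistence} is therefore the least common multiple of whichever of these actually occur on $E$; since $\gcd(m,n) = 1$, this $N$ lies in $\{1,m,n,mn\}$, which already explains the statement ``$N \in \{1,m,n,mn\}$'' in Case 2.

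Next I would compute $e(E)$ using Theorem~\ref{simplef}. The curve $\gamma$ bounds the $2$-disk $U$, and $F^{-1}(U)$ is a compact oriented $4$-manifold on which the $\setS^1$ action is fixed-point free except possibly at the origin $(z,w) = (0,0)$, the unique fixed point of \eqref{rescaction}. In Case~1, when $(0,0) \in U$, the origin lies in $F^{-1}(U)$ (note $F(0,0) = (0, \mathrm{const})$, and after the normalization in the definition $(0,0)$ maps to the origin of the $(J,H)$-plane), so Theorem~\ref{simplef} gives
\begin{equation*}
e(E) = \frac{1}{m_1 n_1} = \frac{1}{mn},
\end{equation*}
since the isotropy weights of the origin are exactly $(m, -n)$ by \eqref{rescaction}, and $m \cdot n$ is what enters the formula up to sign conventions (the orientation is chosen so that the weights are $(m,n)$ with $m > 0$; one has to check $n$ enters positively here, which follows from the sign in $J$). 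In this case I would also need $N = mn$: near the origin the linearized action \eqref{laction} has both a $\setZ_m$-exceptional orbit (in the $w$-axis direction, $z = 0$) and a $\setZ_n$-exceptional orbit (in the $z$-axis direction, $w = 0$) limiting onto the fixed point, and these short orbits — or short orbits isotopic to them — survive in $E = \partial F^{-1}(U)$; hence $N = \mathrm{lcm}(m,n) = mn$. Then Theorem~\ref{corexistence} gives $k = N e(E) = mn \cdot \tfrac{1}{mn} = 1$, and the parallel transport group is spanned by $Na = mna$ and $b$ with $mn\, a \mapsto mn\, a + b$, i.e.\ the fractional monodromy matrix $\left(\begin{smallmatrix} 1 & 1/mn \\ 0 & 1\end{smallmatrix}\right)$, as claimed.

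In Case~2, when $(0,0) \notin U$, the manifold $F^{-1}(U)$ contains no fixed points of the $\setS^1$ action, so Theorem~\ref{simplef} gives $e(E) = 0$. By Theorem~\ref{corexistence} we then get $k = N e(E) = 0$, so the induced map on $H^0_1$ is $Na \mapsto Na$, $b \mapsto b$, i.e.\ the identity, which is the claimed triviality of the fractional monodromy matrix. The value of $N$ is then whatever exceptional orbits happen to lie in $E$ — either none ($N=1$), just the $\setZ_m$ one ($N=m$), just the $\setZ_n$ one ($N=n$), or both ($N=mn$) — depending on whether the disk $U$ is positioned so that $\gamma$ encircles the images of the planes $\{z=0\}$ and $\{w=0\}$, hence the statement $N \in \{1,m,n,mn\}$.

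The main obstacle I anticipate is bookkeeping of orientations and signs: making sure that the isotropy weights of the origin, as they appear in Theorem~\ref{generalthm}/\ref{simplef}, come out as $(m,n)$ with the correct positive sign (so that the sum is $+1/mn$ rather than $-1/mn$), and that the short orbits referred to in Theorem~\ref{mth}'s conclusion are really present in $E$ and really have the orders claimed. A secondary subtlety is verifying that $F^{-1}(U)$ is genuinely a compact oriented $4$-manifold with $F^{-1}(\gamma)$ as its boundary and with only the stated fixed point set — but this is exactly the content of the remark with conditions (i$'$)--(iii$'$) preceding the theorem, which I would invoke directly. Everything else is a mechanical substitution into Theorems~\ref{corexistence} and \ref{simplef}.
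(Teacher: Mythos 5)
Your proposal is correct and follows essentially the same route as the paper: reduce everything to Theorems~\ref{corexistence} and \ref{simplef}, and then determine $N$ from the isotropy of the action \eqref{rescaction} (the fixed point at the origin forcing $N=mn$ in Case 1, and $e(E)=0$ giving triviality in Case 2). The only blemish is a harmless label swap in your Case~1 parenthetical (the $\setZ_m$-orbits lie in $\{w=0\}$ and the $\setZ_n$-orbits in $\{z=0\}$, consistent with your own earlier isotropy computation), which does not affect $N=\mathrm{lcm}(m,n)=mn$.
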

\begin{proof}
 In view of Theorems~\ref{corexistence} and \ref{simplef}, we only need to determine the least common multiple $N.$
 
\textit{Case 1.} In this case the fixed point $q = p = 0$ of the
 $\mathbb S^1$ action
  belongs to $F^{-1}(U) \subset \mathbb R^4.$
 Orbits with $\mathbb Z_m$ and $\mathbb Z_n$ isotropy group emanate from this fixed point and necessarily `hit' the boundary
 $F^{-1}(\gamma)$. It follows that the least common multiple is $N = mn$.
 
 \textit{Case 2.} In this case the fixed point $q = p = 0$ of the
 $\mathbb S^1$ action does not
  belong to $F^{-1}(U) \subset \mathbb R^4.$ However, $\gamma$ might intersect critical values of $F$ that give rise to exceptional orbits in
 $E = F^{-1}(\gamma)$
 with $\mathbb Z_{m}$ or $\mathbb Z_n$ isotropy group.  It follows that the least common multiple is $N = 1$, $m$, $n$ or $mn$.
\end{proof}
\begin{figure}[ht]
\includegraphics[width=\linewidth]{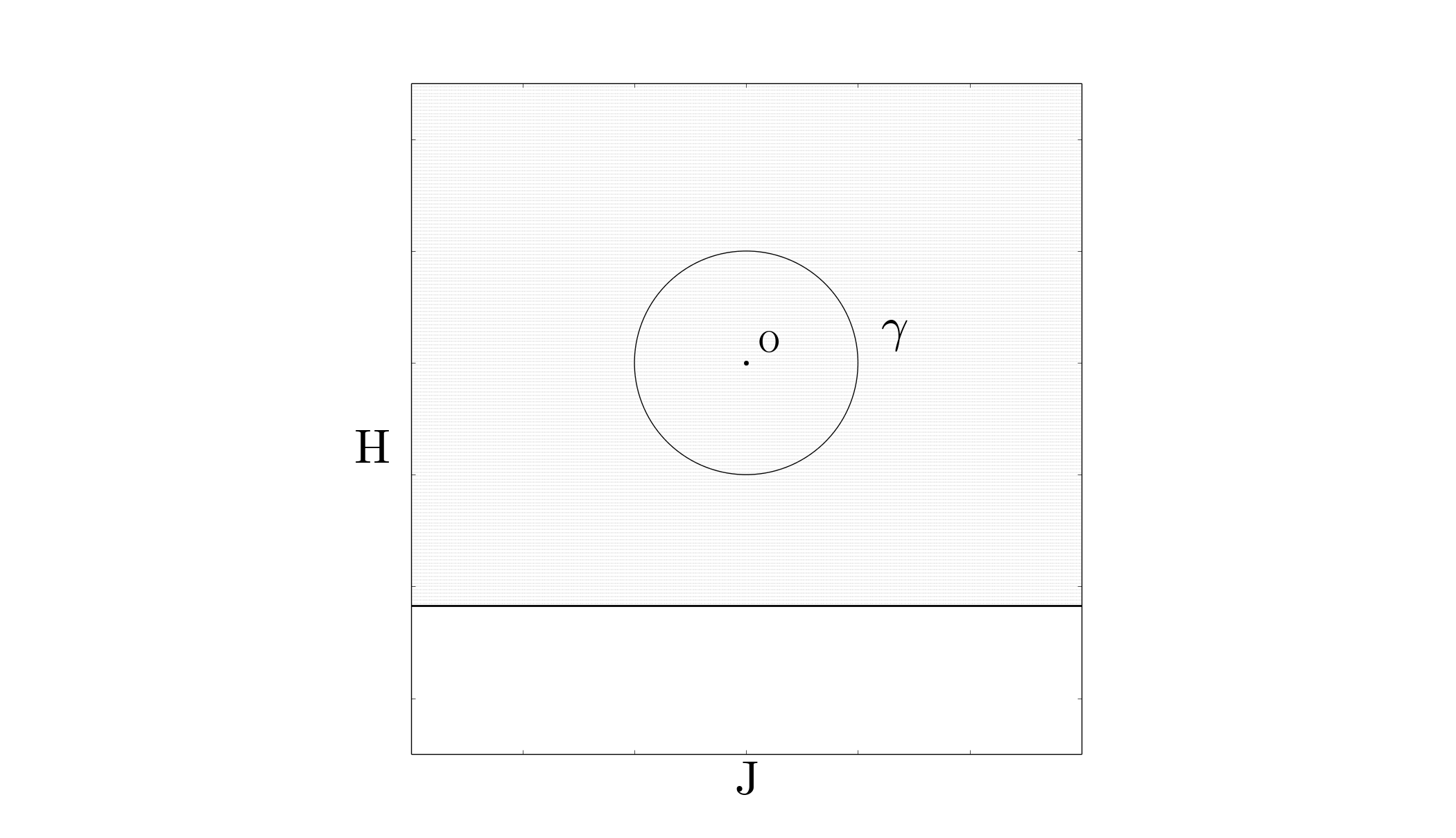}
\caption{Bifurcation diagram of a $1$:$(-1)$ system. The set of regular values is shown gray; 
the critical values are colored black; the isolated critical point $O = (0,0)$ lifts to the 
singly pinched torus $F^{-1}(O)$.}
  \label{1m1}
\end{figure}

\begin{remark}
If $mn<0$, then the fixed point $z = w = 0$ of the $\setS^1$ action is necessarily at the boundary of the corresponding bifurcation diagram.
Hence non-trivial monodromy (standard or fractional) can only be found when $mn>0$.
Because of Theorem~\ref{frmresonance}, non-trivial standard monodromy can manifest itself only when $m = n = 1$.
\end{remark}
\begin{example}
An example of such a $1$:$-1$ resonant system can be obtained by considering
the Hamiltonian
\begin{align*}
H &= p_1q_2+p_2q_1 + \varepsilon (q_1^2+p_1^2)(q_2^2+p_2^2).
\end{align*}
The bifurcation diagram of the integral map $F = (J,H)$ has the from shown in Fig.~\ref{1m1}.
From Theorem~\ref{frmresonance} we infer that the monodromy matrix along $\gamma$ has the form
$$\begin{pmatrix}
1 & 1 \\
0 & 1
\end{pmatrix}\in 
\textup{SL}(2,\setZ).$$
\end{example}

\begin{example}
An example of a $m$:$(-n)$ resonant system with non-trivial fractional monodromy is the specific $1$:$(-2)$ resonant system, which has been 
introduced in \cite{Nekhoroshev2006}. The system is obtained by considering the
 Hamiltonian 
\begin{equation*}
H = 2q_1p_1q_2 + (q_1^2 - p_1^2)p_2 + \varepsilon R(q,p)^2,
\end{equation*}
where $\varepsilon > 0$ and $R = R(q,p)$ is the $1{:}(2)$ oscillator.
The bifurcation diagram of the integral map $F = (J,H)$ has the form shown in Fig.~\ref{1stBD}. In this case the 
set of regular values is
simply connected and, thus, standard monodromy is trivial. Let the curve $\gamma$ be as in Fig.~\ref{1stBD}.
From Theorem~\ref{frmresonance} we infer that 
the parallel transport group $H^0_1$ is spanned by $2a$ and $b$, and that the
fractional monodromy matrix has the form
$$\begin{pmatrix}
1 & 1/2 \\
0 & 1
\end{pmatrix}\in \textup{SL}(2,\mathbb Q).$$
This system is discussed in greater detail in Subsection~\ref{sec/intrresonance}.
\end{example}

\subsection{A system on $S^2\times S^2$}

\label{s2crosss2}

Let $(x_1,x_2,x_3)$ and $(y_1,y_2,y_3)$ be coordinates in $\setR^3$. The relations 
$$\{x_i,x_j\} = \epsilon_{ijk}x_k, \ \{y_i,y_j\} = \epsilon_{ijk}y_k \mbox{ and } \{x_i,y_j\} = 0$$
define a Poisson structure on $\setR^3\times\setR^3$. The restriction of this Poisson structure
to $S^2\times S^2 = \{(x,y) \colon |x| = |y| = 1\}$ gives the canonical symplectic structure $\omega$.
 \begin{figure}[htbp]
  \includegraphics[width=1\linewidth]{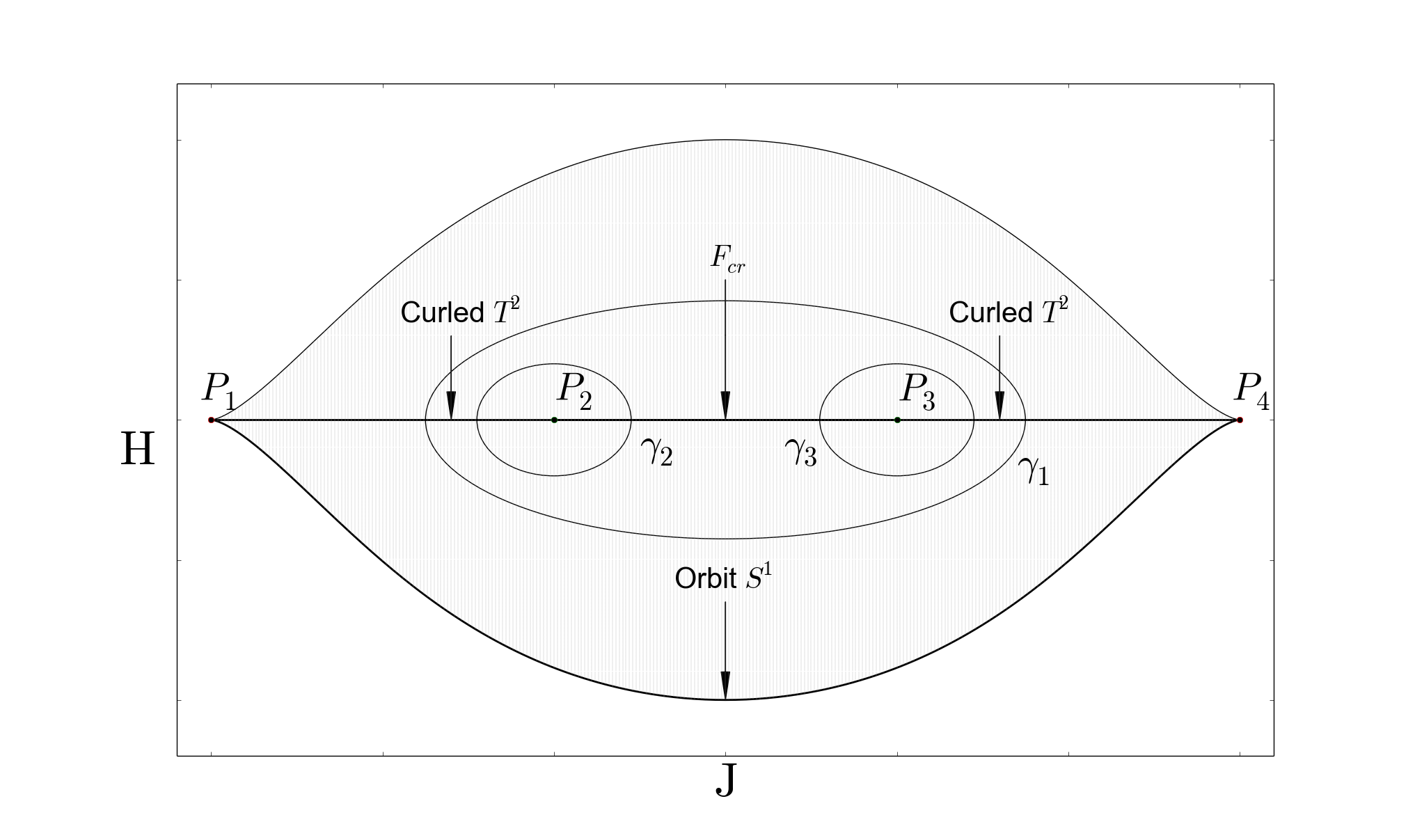}
  \caption{Bifurcation diagram of the integral map $F$. The set of regular
    values shown gray; the critical values are colored black. All regular fibers
    are $2$-tori. Curled $T^2$ contains one exceptional (`short')
    orbit of the $\setS^1$ action. Critical fibers $F_{cr}$ contain two such
    orbits. They can be obtained by gluing two curled tori along a regular orbit
    of the $\setS^1$ action.}
  \label{sphereBD}
\end{figure}

We consider an integrable Hamiltonian system on $(S^2\times S^2, \omega)$ defined by the integral map $F = (J,H) \colon S^2\times S^2 \to \setR^2,$ where
$$J = x_1 + 2y_1 \ \ \mbox{ and } \ \
H = \textup{Re}\{(x_2+ix_3)^2(y_2-iy_3)\}.$$
It is easily checked that the functions $J$ and $H$ commute, so $F$ is indeed an integral map. The bifurcation diagram is shown in
Fig.~\ref{sphereBD}.

Even without knowing the precise structure of critical fibers of $F$, we can compute fractional monodromy along curves
$\gamma_1, \gamma_2$ and $\gamma_3$, shown in Fig.~\ref{sphereBD}.
Specifically, assume that $\gamma_i(0) = \gamma_i(1)$ lifts to a regular torus.
\begin{theorem} For each $\gamma_i$, the parallel transport group is
spanned by $2a_i$ and $b_i$, where
$(a_i,b_i)$ forms a basis of $H_1(F^{-1}(\gamma_i(0))$ and $b_i$ is given by any orbit of the $\setS^1$ action. The fractional monodromy matrices
have the form
$$\begin{pmatrix}
1 & 1/2 \\
0 & 1
\end{pmatrix} \mbox{ for }  \ i = 2,3 \ \ \mbox{ and } \ \ \begin{pmatrix}
1 & 1 \\
0 & 1
\end{pmatrix}  \mbox{ for } \ i = 1.$$
\end{theorem}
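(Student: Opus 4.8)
The plan is to apply Theorems~\ref{corexistence} and~\ref{simplef}, so that everything reduces to computing the Euler numbers $e(F^{-1}(\gamma_i))$ and the least common multiples $N_i$ of the orders of the exceptional orbits along each curve. First I would write down the $\mathbb{S}^1$ action on $S^2\times S^2$ generated by the Hamiltonian flow of $J = x_1 + 2y_1$: it rotates the $x$-sphere about its first axis with frequency $1$ and the $y$-sphere about its first axis with frequency $2$. Hence the action has four fixed points, namely the products of the poles $x = (\pm 1,0,0)$ and $y = (\pm 1,0,0)$, and the orbits through the circle $\{x_1 = \pm 1\}\times(\text{generic }y)$ (where only the $y$-sphere moves, at frequency $2$) carry $\mathbb{Z}_2$ isotropy. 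So the exceptional orbits have order $2$, which gives $N_i \in \{1,2\}$; since each $\gamma_i$ is chosen so that $E_i = F^{-1}(\gamma_i)$ contains a curled torus (equivalently meets the hyperbolic branch once, as in the $1{:}(-2)$ example of Section~\ref{sec/intrresonance}), the preimage genuinely contains a short orbit and $N_i = 2$. This already yields that the parallel transport group is spanned by $2a_i$ and $b_i$, as claimed.

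Next I would compute the Euler numbers via Theorem~\ref{simplef}: for each $i$ let $U_i \subset \mathbb{R}^2$ be the compact region bounded by $\gamma_i$, and determine which of the four fixed points lie in $F^{-1}(U_i)$. Reading off the $J$-values of the fixed points ($J = 3, 1, -1, -3$ at $((1,0,0),(1,0,0))$, $((1,0,0),(-1,0,0))$, $((-1,0,0),(1,0,0))$, $((-1,0,0),(-1,0,0))$ respectively — each of these is also a critical value of $F$ since $H$ vanishes there), and comparing with the bifurcation diagram in Fig.~\ref{sphereBD}, the region enclosed by $\gamma_1$ should contain exactly one fixed point whose local isotropy weights $(m_k,n_k) = (1,2)$ up to sign, contributing $1/(m_k n_k) = 1$... wait, that would give $1/2$, not $1$. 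I would therefore reexamine the linearization: near a fixed point where the $x$-sphere contributes frequency $\pm 1$ and the $y$-sphere frequency $\pm 2$, the weights are $(m_k, n_k) = (\pm 1, \pm 2)$, so each fixed point contributes $\pm \tfrac12$. For $\gamma_1$ the region must enclose two fixed points contributing with the \emph{same} sign, summing to $1$ (or one fixed point with the $x$-sphere frozen, i.e.\ effectively an exceptional orbit reaching in, which is ruled out by the fixed-point count), whereas for $\gamma_2$ and $\gamma_3$ the region encloses a single fixed point, contributing $\tfrac12$. The precise sign bookkeeping — orienting $S^2\times S^2$, choosing the positive coordinates $(z,w)$ at each fixed point as in Eq.~\eqref{laction}, and checking which fixed points fall inside $U_i$ — is the step I expect to require the most care.

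Finally, having established $e(F^{-1}(\gamma_i)) = 1/2$ for $i = 2,3$ and $e(F^{-1}(\gamma_1)) = 1$, and $N_i = 2$ throughout, Theorem~\ref{corexistence} gives the fractional monodromy matrix $\left(\begin{smallmatrix}1 & e(E_i)\\ 0 & 1\end{smallmatrix}\right)$, which is $\left(\begin{smallmatrix}1 & 1/2\\ 0 & 1\end{smallmatrix}\right)$ for $i = 2,3$ and $\left(\begin{smallmatrix}1 & 1\\ 0 & 1\end{smallmatrix}\right)$ for $i = 1$, as asserted. The main obstacle, as noted, is the orientation-sensitive identification of the enclosed fixed points and their isotropy weights; a clean way around it is to invoke the cobordism form of Theorem~\ref{generalthm} and compute only the \emph{differences} $e(F^{-1}(\gamma_i)) - e(F^{-1}(\gamma_j))$ together with one base case (e.g.\ a small curve enclosing no fixed point, for which the Euler number vanishes), thereby pinning down all three values from the relative positions of the $\gamma_i$ in Fig.~\ref{sphereBD} without ever choosing global orientations.
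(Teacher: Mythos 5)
Your proposal is correct and follows essentially the same route as the paper: determine $N=2$ from the $\mathbb{Z}_2$-exceptional orbits sitting over the critical line $H=0$, locate the fixed points of the circle action enclosed by each $\gamma_i$ (each with $|m_k n_k|=2$, contributing $\pm\tfrac12$), and apply Theorems~\ref{corexistence} and~\ref{simplef}. One small slip: the $J$-values of $(1,0,0)\times(-1,0,0)$ and $(-1,0,0)\times(1,0,0)$ are $-1$ and $1$ respectively (you swapped them); the paper identifies the former, projecting to $P_2$, as the weight $m=1$, $n=2$ fixed point enclosed by $\gamma_2$.
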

\begin{proof}
 Consider the case $i = 2$. The other cases can be treated analogously. 
 The curve $\gamma_2$ intersects the critical line $H = 0$
 at two points $\xi_1$ and $\xi_2$. Let $\xi_1 < P_2 < \xi_2$ on $H = 0$. The critical fiber $F^{-1}(\xi_1)$, which is a curled torus, 
 contains  one exceptional
    orbit of the $\setS^1$ action with $\mathbb Z_2$ isotropy.
    The critical fiber $F^{-1}(\xi_1)$ contains two such
    orbits. Finally, observe that  the point 
    $$(1,0,0) \times (-1,0,0) \in S^2 \times S^2,$$ 
which    projects to $P_2$ under the map $F$,
    is fixed under the $\mathbb S^1$ action and has isotropy weights $m = 1, n = 2$. Since $F^{-1}(\gamma_2)$ is connected, it is left to apply
    Theorems~\ref{corexistence} and \ref{simplef}.
\end{proof}

\subsection{Revisiting the quadratic spherical pendulum}

The example of the system on $S^2 \times S^2$ discussed in the previous Subsection~\ref{s2crosss2} shows
that fractional monodromy matrix along a given curve $\gamma_1$ could be an integer matrix even if
standard monodromy along $\gamma_1$ is not defined. 
In this subsection we show that the same phenomenon can
appear when the isotropy groups are either trivial or $\setS^1,$ that is, when the 
$\setS^1$ action is free outside fixed points.

Consider a particle moving on the unit sphere 
\begin{align*}
\{ x = (x_1,x_2,x_3) \in \setR^3 \colon x_1^2 + x_2^2 + x_3^2 = 1 \}
\end{align*}
in a quadratic potential $V(x_3) = b x_3^2 + c x_3$.
The corresponding Hamiltonian system $(TS^2, \Omega|_{TS^2}, H)$, where $H(x,v) = \frac{1}{2}\langle v, v \rangle + V(x)$ is the total energy, is called \emph{quadratic spherical pendulum} \cite{Efstathiou2005}.
This system is completely integrable since the $x_3$ component $J$ of the angular momentum is conserved.
Moreover, $J$ generates a global Hamiltonian $\setS^1$ action on $TS^2$.
For a certain range of parameters $b$ and $c$ the bifurcation diagram of the integral map $F = (J,H)$ has the form shown in Fig.~\ref{qsp}.
 \begin{figure}[htbp]
  \includegraphics[width=1\linewidth]{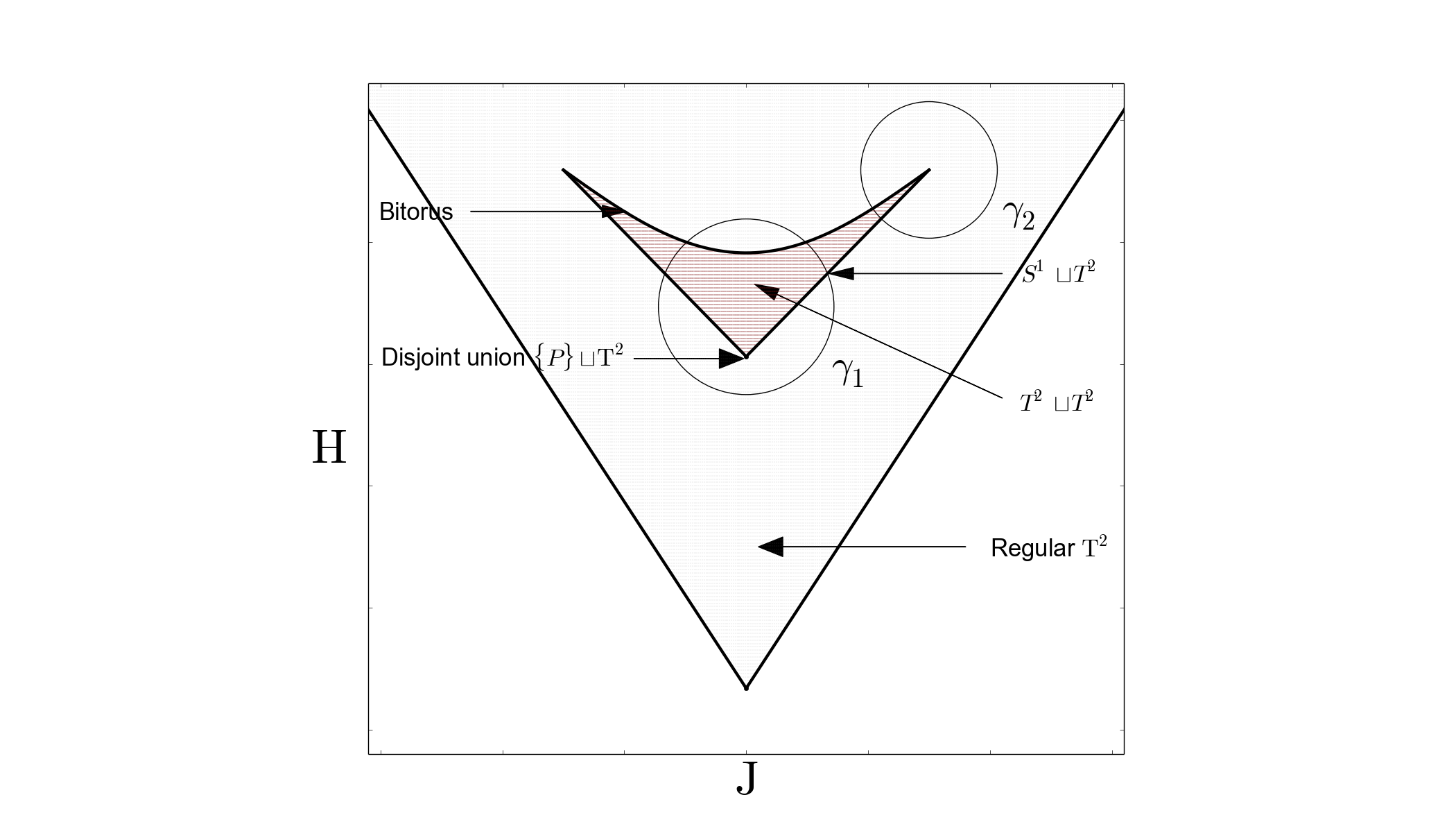}
  \caption{Bifurcation diagram of the integral map $F$. The set of regular
    values shown gray. The critical values are colored black. The points in the interior of the `island' are regular and lift
    to the disjoint union of $2$ tori.}
  \label{qsp}
\end{figure}

Let $\gamma_1$ and $\gamma_2$ be as in Fig.~\ref{qsp}. Assume that the starting point $\gamma_i(0) = \gamma_i(1)$ lifts to a regular torus.
\begin{theorem} \label{monqsp}
 For each $\gamma_i$, the parallel transport group coincides with the whole homology group $H_1(F^{-1}(\gamma_i(0))$. The fractional monodromy matrices have the form
 $$\begin{pmatrix}
1 & 1 \\
0 & 1
\end{pmatrix} \mbox{ for }  \ i = 1 \ \ \mbox{ and } \ \ \begin{pmatrix}
1 & 0 \\
0 & 1
\end{pmatrix}  \mbox{ for } \ i = 2.$$
\end{theorem}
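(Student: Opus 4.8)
The plan is to read everything off Theorems~\ref{corexistence} and~\ref{simplef} once the singularities of the $\setS^1$ action on $TS^2$ are understood. The flow of $J=x_1v_2-x_2v_1$ is the lift to $TS^2$ of the rotation of $S^2$ about the $x_3$-axis. A point $(x,v)\in TS^2$ is fixed by it precisely when $x$ lies on the $x_3$-axis, i.e.\ $x=(0,0,\pm1)$, and $v\in T_xS^2$ is rotation-invariant, i.e.\ $v=0$; the same computation shows that no point has nontrivial finite isotropy. Hence the action is free on $TS^2$ away from the two rest points $p_\pm=((0,0,\pm1),0)$, there are no exceptional orbits, and in the notation of Theorems~\ref{mth} and~\ref{corexistence} one has $N=1$. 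The images of the fixed points are $F(p_\pm)=(0,V(\pm1))$; for the parameters of Fig.~\ref{qsp}, $p_+$ is the focus--focus (unstable) equilibrium, just as in the ordinary spherical pendulum, so by Remark~\ref{remark/focus} and the proof of Theorem~\ref{generalthm} its isotropy weights satisfy $m_+n_+=1$, whereas $p_-$ is the global minimum of $H$, an elliptic--elliptic point at the bottom vertex of the bifurcation diagram whose Euler contribution is $1/(m_-n_-)=-1$ (the minus sign reflecting that the tangent planes to $S^2$ at the north and south poles induce opposite orientations inside the symplectic manifold $TS^2$).

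Next I would check the regularity conditions (i)--(iii) of Section~\ref{application} for each $\gamma_i$. Condition (i) is the standing hypothesis that $\gamma_i(0)=\gamma_i(1)$ lifts to a regular torus. Let $U_i$ be the closed disk with $\partial U_i=\gamma_i$. Since $\gamma_i$ avoids $F(p_+)$ and $F(p_-)$ (Fig.~\ref{qsp}), the action is fixed point free on $E_i=F^{-1}(\gamma_i)$, which gives (ii); and since $F$ is proper, $F^{-1}(U_i)$ is compact, and by the standard local model of the nondegenerate critical fibres that $\gamma_i$ crosses it is an oriented topological $4$-manifold whose boundary $E_i$ is a closed oriented connected $3$-manifold, giving (iii). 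Theorem~\ref{corexistence} then applies: fractional monodromy along $\gamma_i$ is defined, the parallel transport group $H^0_1$ equals all of $H_1(F^{-1}(\gamma_i(0)))$ because $N=1$, and in the basis $(a_i,b_i)$ the monodromy matrix is $\begin{pmatrix}1 & e(E_i)\\ 0 & 1\end{pmatrix}$ with $e(E_i)\in\setZ$.

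It remains to evaluate $e(E_i)$ by Theorem~\ref{simplef}: $e(E_i)=\sum 1/(m_kn_k)$, the sum running over the fixed points of the action lying in $F^{-1}(U_i)$, i.e.\ over those $p_\pm$ with $F(p_\pm)\in U_i$. Reading this from Fig.~\ref{qsp}, the disk $U_1$ contains exactly the focus--focus value $F(p_+)$, so $e(E_1)=1$ and the monodromy along $\gamma_1$ is $\begin{pmatrix}1&1\\0&1\end{pmatrix}$; the disk $U_2$ contains no fixed point of the action (the curve $\gamma_2$ merely encircles the island, and $F(p_-)$ sits at the boundary vertex of the image of $F$), so the sum is empty, $e(E_2)=0$, and the monodromy along $\gamma_2$ is trivial. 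I would then note that standard monodromy along $\gamma_1$ is nonetheless undefined, since $\gamma_1$ meets values over which the fibre is a disjoint union of two tori (it passes through the island), so that $F|_{\gamma_1}$ is not a torus bundle; the content of the statement is that Definition~\ref{dpt0} still applies, because $E_1$ is a Seifert manifold with two $2$-torus boundary components, and Theorem~\ref{mth} produces the integer matrix above.

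The step I expect to be the main obstacle is the verification in the second paragraph: confirming that $F^{-1}(U_i)$ is a genuine topological manifold-with-boundary and that $E_i$ is connected, even though $\gamma_i$ crosses the singular locus and $U_i$ meets the region of disconnected fibres, together with reading off from Fig.~\ref{qsp} precisely which of $F(p_+),F(p_-)$ lie in $U_1$ and in $U_2$. Computing the isotropy weights, and hence the Euler contributions $+1$ and $-1$, is routine once the orientation of $TS^2$ is fixed, and for the focus--focus point it is already contained in Remark~\ref{remark/focus}.
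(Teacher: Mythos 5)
Your overall strategy is exactly the paper's: verify conditions (i)--(iii), locate the fixed points of the lifted rotation on $TS^2$, and feed their isotropy weights into Theorems~\ref{corexistence} and \ref{simplef}; you also correctly observe that the action is free away from the two points over the poles, so $N=1$ and the parallel transport group is all of $H_1(F^{-1}(\gamma_i(0)))$. However, there is a genuine error in the step you dismiss as routine: the identification of the equilibria. In the parameter regime of Fig.~\ref{qsp} the point $P=(0,0,1)\times(0,0,0)$ is \emph{elliptic--elliptic}, not focus--focus, and the system has no focus--focus points at all: a rank-$0$ critical point of $F=(J,H)$ must satisfy $dJ=0$ and hence be a fixed point of the action, and both pole equilibria are here nondegenerate potential extrema. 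This is precisely what makes the example worth stating --- the island of fibres with two components is produced by two potential wells, $F(P)$ sits at a vertex of that island (a focus--focus value would instead be an isolated critical value surrounded by connected torus fibres, with no island), so any curve enclosing $F(P)$ must cross critical values, and yet the monodromy is a non-trivial integer matrix. Consequently your derivation of $m_+n_+=1$ from Remark~\ref{remark/focus} (the anti-Hopf fibration around a focus--focus point) has no basis here; the weights $1{:}1$ must instead be read off directly from the linearization of the lifted rotation at $P$, which acts by $e^{it}$ simultaneously on the base and fibre directions of $T_P(TS^2)$. The conclusion $e(E_1)=1$ survives, but not for the reason you give.

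A second, independent error is the contribution $-1$ you assign to the south-pole point $p_-$: with the orientation conventions of Theorem~\ref{generalthm} both poles contribute $+1$. To see this, note that the lifted rotation is free on the unit tangent bundle $UTS^2\cong SO(3)$, which is therefore a principal circle bundle over $S^2$ with Euler number of absolute value $2$ (since $\pi_1(SO(3))=\setZ_2$); applying Theorem~\ref{generalthm} to the unit disk bundle gives $1/(m_+n_+)+1/(m_-n_-)=\pm2$, so the two contributions must have the same sign, not opposite signs. This slip does not change your final answer only because neither $U_1$ nor $U_2$ contains $F(p_-)$. The remainder of your argument --- the verification of (i)--(iii), the observation that $N=1$, the reading of which critical values lie in $U_1$ and $U_2$, and the closing remark that standard monodromy along $\gamma_1$ is undefined while Definition~\ref{dpt0} and Theorem~\ref{mth} still apply --- is sound and agrees with the paper.
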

\begin{proof}
 Consider the case $i = 1$. The other case can be treated similarly. 
 The $\mathbb S^1$ action  is free on the connected manifold $F^{-1}(\gamma_1)$. The  Euler number of this manifold equals $1$. Indeed,
 the elliptic-elliptic
 point 
 $$P = (0,0,1)\times(0,0,0) \in TS^2 \subset T\mathbb R^3,$$ 
which projects to the point $F(P) \in \textup{int}(\gamma_1)$, is fixed under the $\mathbb S^1$ action and has isotropy weights $m =1, n= 1$.
 It is left to apply Theorems~\ref{corexistence} and \ref{simplef}.
\end{proof}

\begin{remark}
 From Theorem~\ref{monqsp} it follows that all homology cycles can be parallel transported along $\gamma_i, \ i = 1,2.$ Even though this situation is 
 very similar
 to the case of standard monodromy, the monodromy along $\gamma_i$ is fractional. We note that such examples have not been considered until now.
\end{remark}

\section{Proof of Theorem~\ref{mth}} \label{seifert}

In the present section we use the notation introduced in Subsection~\ref{mresult}. The result, Theorem~\ref{mth}, will follow from 
Lemmas~\ref{lemma/existence}, \ref{lemma/uniqueness}, \ref{lemma/eulernumber}, and \ref{lemma/span} that are given below.
 \begin{figure}[htbp]
 \hspace{0cm}
  \includegraphics[width=0.7\linewidth]{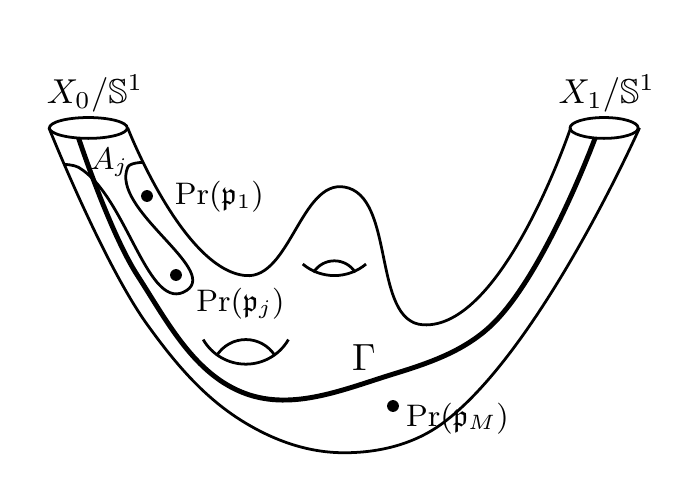}
  \caption{The base manifold $X / \setS^1$. }
  \label{base}
\end{figure} 
  
  \begin{lemma} \label{lemma/existence}
There exists $k \in \setZ$ such that
$(Na_0, Na_1 + kb_1)$ and $(b_0, b_1)$ belong to  $\partial_{*}(H_2(X, \partial X)).$
\end{lemma}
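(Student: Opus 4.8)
We must produce a class in $H_2(X,\partial X)$ whose boundary is $(Na_0, Na_1+kb_1)$ for some integer $k$, and another whose boundary is $(b_0,b_1)$.

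**The plan for $(b_0,b_1)$.** This is the easy case. Pick a regular orbit $O$ of the $\mathbb{S}^1$ action on $X_0$; it represents $b_0$. Now take a section $s\colon \gamma \to B$-ish picture: more precisely, work in the base $B = X/\mathbb{S}^1$, which (after cutting) is a surface with two boundary circles, and choose an embedded arc $c$ in $B$ joining the boundary component $\rho(X_0)$ to $\rho(X_1)$ avoiding all images of exceptional orbits. The preimage $\rho^{-1}(c)\subset X$ is then an annulus (a trivial circle bundle over an arc), properly embedded, whose two boundary circles are orbits on $X_0$ and $X_1$, i.e. represent $b_0$ and $b_1$. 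Orienting it appropriately gives a class in $H_2(X,\partial X)$ mapping to $(b_0, -b_1)$ under $\partial_*$ — wait, one must be careful with the sign convention in Definition~\ref{dpt0}; after fixing orientations this yields exactly $(b_0,b_1)\in\partial_*(H_2(X,\partial X))$ in the sense needed (the cylinder $S$ with $[S_i]=(-1)^i b_i$).

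**The plan for $Na_0$.** Here is the crux. The cycle $a_0$ by itself generally cannot bound a surface in $X$ together with any cycle on $X_1$, because of the exceptional orbits — this is precisely why the factor $N$ appears. Consider the surface $B$ (the cut base). Take an embedded arc $\delta$ in $B$ from $\rho(X_0)$ to $\rho(X_1)$; its two endpoints lift to $a_0$ and $a_1$ up to fiber corrections. The preimage $\rho^{-1}(\delta)$ is a surface in $X$ but it meets exceptional fibers, where $\rho$ is not a genuine fibration; near an exceptional orbit of order $n_j$, a meridian disk upstairs wraps $n_j$ times. The fix: instead of trying to lift a single arc, one lifts $N$ parallel copies, or equivalently one works with the principal bundle $X' = X/\mathbb{Z}_N$ from Subsection~\ref{mresult}, where $h\colon X\to X'$ is the quotient. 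Over $X'$ (a genuine principal circle bundle after cutting, hence trivial over the cut base since the base is a surface with nonempty boundary) one can find a section, giving a surface $S'$ in $X'$ whose boundary realizes $a'_0$ and $a'_1 + (\text{some } k)\,b'_1$, where the integer $k$ records how the two local trivializations near the two ends differ — this is exactly the relative Euler number contribution. Then pull $S'$ back via $h$: because $h$ restricted to a regular orbit is an $N$-fold cover, $h^{-1}(S')$ is a surface in $X$ whose boundary on $X_0$ is $N a_0$ and on $X_1$ is $N a_1 + k b_1$. Orienting consistently and invoking the reformulation of Definition~\ref{dpt0} for compact $3$-manifolds (the submanifold $S$ version), we conclude $(Na_0, Na_1 + k b_1)\in \partial_*(H_2(X,\partial X))$.

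**Main obstacle.** The delicate point is the behavior near the exceptional orbits and the bookkeeping of the integer $k$: one must check that $h^{-1}(S')$ is an embedded (or at least immersed-with-controlled-boundary) oriented surface, that its boundary picks up exactly $N a_i$ on each torus and no extra fiber classes beyond the single $k b_1$, and that $k$ is independent of the choices of arcs and sections (it depends only on the isotopy class of $f$, as claimed). The cleanest way to handle this is to phrase everything downstairs in $X'$, where the standard fact that a principal circle bundle over a surface-with-boundary is trivial does the work, and then argue that $h$ is an $N$-fold orbit-wise covering so that preimages of properly embedded surfaces transverse to the fibers multiply the fiber-transverse homology classes by $N$ while fixing the fiber class up to the integer $k$. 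I would not expect subtleties in the $(b_0,b_1)$ part; essentially all the content is in constructing $S'$ and controlling $k$.
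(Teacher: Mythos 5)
Your proposal is correct and follows essentially the same route as the paper: the paper also transports $b_0$ via the preimage of an arc in the base avoiding the images of exceptional orbits, and transports $Na_0$ by passing to the trivial principal bundle $X' = X/\setZ_N$ over the cut base, taking a global section $S$ there, and pulling it back through the branched $N$-fold covering $\pi \colon X \to X'$ to obtain a relative $2$-cycle with boundary $(Na_0, Na_1 + kb_1)$. The only delicate point you flag --- the behavior of $\pi^{-1}(S)$ near exceptional fibers --- is handled in the paper exactly as you suggest, by observing that $\pi \colon \pi^{-1}(S) \to S$ is a branched $N$-covering.
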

\begin{proof}
Let $\mathbb Z_N$ be the order $N$ subgroup  of $\mathbb S^1.$ 
The quotient $X' = X/\setZ_N$, which is given by the induced action of the subgroup $\mathbb Z_N$, is the total space of the
 principal circle bundle $$\Pr{'} \colon X' \to X / \mathbb S^1.$$ 
We note that this bundle is, moreover, trivial. Indeed, the base $X/\mathbb S^1$ has a boundary and is, thus, homotopy equivalent to a graph.
  
  Let $b_i^r = b_i / \setZ_N, \ i = 0,1$. Then $(a_i,b_i^r)$ forms a basis of $H_1(X_i / \setZ_N)$.
There is a unique parallel transport of the cycles $a_0$ and $b^r_0$ along $X'$. 
Indeed, take a global section $s \colon X'/\setS^1 \to X'$ with $s(X_0/\setS^1) = a_0.$ Then 
$S = s(X'/\setS^1)$
is a relative $2$-cycle that gives the parallel transport of $a_0$. 
In order to transport the cycle $b_0^r$ take a smooth curve 
$\Gamma \subset X/\setS^1$ connecting $X_0/\setS^1$ with $X_1/\setS^1$ and define the relative $2$-cycle by
$(\Pr{'})^{-1}(\Gamma).$

\begin{remark}
In what follows we assume that $\Gamma$ is a simple curve that does not contain the singular points $\Pr (\mathfrak p_1), \ldots, \Pr (\mathfrak p_M),$ 
where $\Pr \colon X \to X / \setS^1$ is the canonical projection; see Fig.~\ref{base}.
\end{remark}

From above it follows that the parallel transport in the reduced space has the form $
a_0 \mapsto a_1 + k b_1^r$ and $
b_0^r \mapsto b_1^r$ for some
 $k \in \mathbb Z$.
 The parallel transport of the cycles $(a_0,b_0^r)$ in the reduced space lifts to the parallel transport of the cycles $(Na_0,b_0)$ along $X$ 
 in the original space. 
Indeed, let $\pi \colon X \to X'$ be the quotient map, given by the action of $\mathbb Z_N$. The preimage 
$$\pi^{-1}((\Pr{'})^{-1}(\Gamma)) = \Pr{^{-1}}(\Gamma)$$ 
transports $b$ since $\Gamma$ does not contain the singular points $\Pr(\mathfrak p_j)$. In order to transport $Na$ take 
$\pi^{-1}(S).$ 
Since $\pi \colon \pi^{-1}(S) \to S$ is a branched $N$-covering, see Fig.~\ref{zet2}, the preimage $\pi^{-1}(S)$ is a relative $2$-cycle that transports $Na$.
The result follows.
\end{proof}
\begin{figure}[htbp]
\hspace{0cm}
  \includegraphics[width=0.36\linewidth]{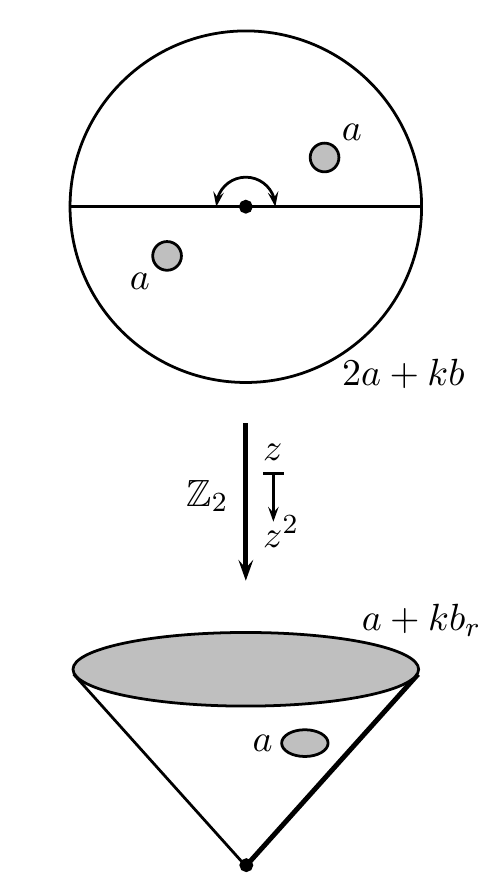}
  \caption{An example of the covering map $\pi \colon \pi^{-1}(S) \to S$. Here the Seifert manifold $X$ contains only one exceptional orbit with 
  $\setZ_2$ isotropy ($N = 2$); the base $X / \setS^1 \cong S$ is a `cone with a hole'. }
  \label{zet2}
\end{figure}

This following lemma shows that the parallel transport along $X$
is unique.

\begin{lemma} \label{lemma/uniqueness}
Suppose that
$(0,c) \in \partial_*(H_2(X,\partial X))$ for some $c \in H_1(X_1)$. Then we have $c = 0$. 
 \end{lemma}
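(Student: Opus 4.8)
The plan is to extract the statement from the long exact sequence of the pair $(X,\partial X)$ used in Definition~\ref{dpt0} and then reduce everything to a single injectivity statement. By exactness, $\partial_{*}\big(H_2(X,\partial X)\big)$ is exactly the kernel of the map $j_{*}\colon H_1(\partial X)\to H_1(X)$ induced by the inclusion $\partial X\hookrightarrow X$. Writing $H_1(\partial X)=H_1(X_0)\oplus H_1(X_1)$, the restriction of $j_{*}$ to the second summand is the map $(\iota_1)_{*}\colon H_1(X_1)\to H_1(X)$ induced by $\iota_1\colon X_1\hookrightarrow X$, so that $j_{*}(0,c)=(\iota_1)_{*}(c)$. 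Hence $(0,c)\in\partial_{*}(H_2(X,\partial X))$ is equivalent to $(\iota_1)_{*}(c)=0$, and the lemma is equivalent to the assertion that $(\iota_1)_{*}\colon H_1(X_1)\to H_1(X)$ is injective.

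To prove this injectivity I would compare $X$ with its reduction $X'=X/\setZ_N$. Recall from the proof of Lemma~\ref{lemma/existence} that $\Pr{'}\colon X'\to B=X/\setS^1$ is a principal circle bundle, and that it is \emph{trivial} because $B$ has non-empty boundary and is therefore homotopy equivalent to a graph; fix a trivialization $X'\cong B\times\setS^1$. Under it, the subset $X_1'=X_1/\setZ_N=(\Pr{'})^{-1}(C_1)$ becomes $C_1\times\setS^1$, where $C_1=X_1/\setS^1$ is one of the two boundary circles of $B$ (the other being $C_0=X_0/\setS^1$). The quotient map $\pi\colon X\to X'$ carries $X_1$ into $X_1'$, so there is a commutative square
\[
\begin{tikzcd}
H_1(X_1) \arrow{r}{(\iota_1)_{*}} \arrow[swap]{d}{(\pi|_{X_1})_{*}} & H_1(X) \arrow{d}{\pi_{*}} \\
H_1(X_1') \arrow{r} & H_1(X').
\end{tikzcd}
\]
It is therefore enough to show that the left vertical and the bottom horizontal maps are both injective: the composite along bottom-left is then injective, which forces $(\iota_1)_{*}$ to be injective as well, and the lemma follows.

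The left vertical map is injective because $\setS^1$, hence $\setZ_N$, acts freely on $\partial X$, so $\pi|_{X_1}\colon X_1\to X_1'$ is an $N$-fold covering of the torus $X_1$ onto the torus $X_1'$; modelling it as $\setR^2/\Lambda'\to\setR^2/\Lambda$ with $\Lambda'\subseteq\Lambda$ of index $N$ exhibits the induced map on first homology as the inclusion $\Lambda'\hookrightarrow\Lambda$. For the bottom horizontal map, Künneth identifies it with $\iota_{*}\oplus\mathrm{id}$ on $H_1(C_1)\oplus H_1(\setS^1)\to H_1(B)\oplus H_1(\setS^1)$, where $\iota\colon C_1\hookrightarrow B$; since a direct sum of injections is injective, it suffices to know that $\iota_{*}$ is injective, i.e. that $[C_1]$ has infinite order in $H_1(B)$. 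This last point is the only step requiring a genuine (if short) argument, and I expect it to be the crux: from the long exact sequence of $(B,\partial B)$, together with $H_2(B)=0$ and Lefschetz duality $H_2(B,\partial B)\cong\setZ$, one gets that $\ker\big(H_1(\partial B)\to H_1(B)\big)$ is the infinite cyclic group generated by $[C_0]+[C_1]$ (the image of the relative fundamental class); since $\partial B$ consists of exactly the two circles $C_0,C_1$, no nonzero multiple of $[C_1]$ lies in $\setZ\langle[C_0]+[C_1]\rangle$, so $[C_1]$ is non-torsion in $H_1(B)$. Combining the two injectivities gives $(\iota_1)_{*}$ injective, whence $c=0$. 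Note that orientability of $B$ enters twice — through the triviality of $\Pr{'}$ and through $\partial B$ having precisely two components — in agreement with the failure of uniqueness recorded in Remark~\ref{remarknsm}.
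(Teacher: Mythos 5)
Your proof is correct, and it takes a genuinely different route from the paper's. The paper argues globally on the subgroup $\partial_*(H_2(X,\partial X))$: by the ``half lives, half dies'' consequence of Poincar\'e--Lefschetz duality for the orientable $3$-manifold $X$, this image has rank $2$ and is free; it already contains the two independent elements $(Na_0,Na_1+kb_1)$ and $(b_0,b_1)$ produced in Lemma~\ref{lemma/existence}, whose first components are independent in $H_1(X_0)$, so a short linear-algebra argument shows no nonzero element of the form $(0,c)$ can also lie in it. You instead use exactness to convert the statement into the injectivity of $(\iota_1)_*\colon H_1(X_1)\to H_1(X)$, and then prove that injectivity directly by factoring through the $\setZ_N$-quotient: the covering $X_1\to X_1'$ is $H_1$-injective, and in the trivial circle bundle $X'\cong B\times\setS^1$ everything reduces to the fact that a boundary circle of a compact connected orientable surface with two boundary components is non-torsion in $H_1(B)$, which you extract correctly from the pair sequence of $(B,\partial B)$. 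What the paper's route buys is brevity, since Lemma~\ref{lemma/existence} is proved anyway and duality does the heavy lifting; what your route buys is logical independence from Lemma~\ref{lemma/existence} and from the duality input, plus the sharper intermediate statement that the inclusion of a boundary torus of such a Seifert manifold is injective on $H_1$ --- and your closing observation that orientability of $B$ enters exactly where Remark~\ref{remarknsm} predicts uniqueness to fail is a good consistency check. (Connectedness of $X$, hence of $B$, is used in your surface computation, but this is a standing assumption in the paper as well, e.g.\ in the choice of the arc $\Gamma$ in the proof of Lemma~\ref{lemma/existence}.)
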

\begin{proof}
 This statement was essentially proved in \cite{Efstathiou2013} (see \S 7.1 therein). 
 For the sake of completeness we provide a proof below.

Since $X$ is an orientable $3$-manifold, the rank of the image $\partial_*(H_2(X, \partial X))$ is half of the rank of $H_1(\partial X) \simeq \setZ^2 \oplus \setZ^2$. Hence
\begin{equation*}
\textup{rk} \ \partial_*(H_2(X, \partial X)) = 2.
\end{equation*}
As a subgroup of a free abelian group $H_1(\partial X)$, the image $\partial_*(H_2(X, \partial X))$ is a free abelian group and thus is isomorphic to $\setZ \oplus \setZ$. 

From Lemma~\ref{lemma/existence} we get that 
$\alpha = (Na_0, Na_1 + k b_1)$ and $\beta = (b_0,b_1)$ belong to 
$\partial_*(H_2(X, \partial X))$.

Suppose that parallel transport along $X$ is not unique. Then there exists an element $\eta = (0,c) \in \partial_*(H_2(X, \partial X))$ with $c \ne 0.$ 
Since  $\alpha$ and $\beta$ are linearly independent over $\setZ$, we get
$l_1 \alpha + l_2 \beta = l_3 \eta$, where $l_j$ are integers and $l_3 \ne 0$. But $l_1 N a_0 + l_2 b_0 = 0$, so $l_1 = l_2 = 0$ and we get a contradiction.
\end{proof}

The set $H^0_1$ of cycles $\alpha \in H_1(X_0)$ that can be parallel transported along $X$ forms a subgroup
of $H_1(X_0)$. Since $Na_0$ and $b_0$ can be parallel transported along $X$, the group
  $H^0_1$ is spanned by  $La_0$ and $b_0$ for some $L \in \setN$, which divides $N$.
  Our goal is to prove that $L = N$. The proof of this equality is based on the important Lemma~\ref{lemma/eulernumber} below.
  
    Let $E$ be a closed Seifert manifold which is obtained from $X$ by identifying the boundary tori $X_i$ via an orbit preserving diffeomorphism
  that sends $a_0$ to $a_1$ and $b_0$ to $b_1$.

  \begin{lemma} \label{lemma/eulernumber}
The Euler number $e(E)$ of the Seifert manifold $E$ satisfies $e(E) = \frac{k}{N}$.
\end{lemma}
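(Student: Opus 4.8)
The plan is to identify $e(E)$ by passing to the principal circle bundle $E' = E/\mathbb{Z}_N$ and reading off its Euler number from the trivialization of $X' = X/\mathbb{Z}_N$ constructed in the proof of Lemma~\ref{lemma/existence}. Since the gluing diffeomorphism used to build $E$ is orbit preserving, it commutes with the $\mathbb{Z}_N$ action, so $E'$ is obtained from $X'$ by identifying the boundary tori $X_0' = X_0/\mathbb{Z}_N$ and $X_1' = X_1/\mathbb{Z}_N$ via the induced diffeomorphism $\phi'$, which sends $a_0 \mapsto a_1$ and $b_0^r \mapsto b_1^r$ (notation as in Lemma~\ref{lemma/existence}). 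As $E$ is closed, $E' \to \bar B$ is a principal $\mathbb{S}^1$-bundle over the closed orientable surface $\bar B$ obtained from $B = X/\mathbb{S}^1$ by gluing its two boundary circles $c_0, c_1$ to a single circle $\bar c$, and by the definition of the Euler number of a Seifert fibration $e(E) = e(E')/N$. Hence it suffices to prove $e(E') = k$.

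Recall from the proof of Lemma~\ref{lemma/existence} that $X' \cong B \times \mathbb{S}^1$ is trivial and carries a global section $s \colon B \to X'$ with $[s|_{c_0}] = a_0$ in $H_1(X_0')$, while the parallel transport statement there gives $[s|_{c_1}] = a_1 + k b_1^r$ in $H_1(X_1')$. Since $\bar B \setminus \bar c \cong \mathrm{int}(B)$, the section $s$ descends to a section of $E'$ over $\bar B \setminus \bar c$. Over a tubular annular neighborhood $A$ of $\bar c$ the bundle $E'$ is also trivial, with a section $s'$; on each of the two components of $A \setminus \bar c$ the sections $s$ and $s'$ differ by a map into the fiber $\mathbb{S}^1$, and, by the standard obstruction-theoretic (clutching) description of circle bundles over surfaces, $e(E')$ equals the difference of the degrees of these two comparison maps along a loop isotopic to $\bar c$. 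Using the trivialization $s'$ to identify the torus over $c_0$ with the torus over $c_1$ (so that $\phi'$ becomes the identity), this difference is precisely the number of fibers by which $[s|_{c_1}]$ exceeds $\phi'_*[s|_{c_0}] = a_1$, namely $k$. Therefore $e(E') = k$, and $e(E) = e(E')/N = k/N$, as claimed.

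The step I expect to be the main obstacle is the last one: verifying rigorously, and with the correct sign relative to the paper's orientation conventions, that the ``jump of the section $s$ by $k$ fibers across $\bar c$'' is exactly the Euler number $e(E')$. This can be done by choosing a $2$-disk $D \subset \bar B$ meeting $\bar c$ in a single arc, noting that $\bar B \setminus \mathrm{int}(D)$ deformation retracts onto a graph over which $E'$ is trivial, and checking that the resulting clutching function $\partial D \to \mathbb{S}^1$ has degree $k$; alternatively one evaluates $\langle c_1(E'), [\bar B] \rangle$ by a Mayer--Vietoris computation for the cover $\bar B = \mathrm{int}(B) \cup A$. Everything else --- the commutation of the gluing with the $\mathbb{Z}_N$-reduction, the triviality of $X'$, and the bookkeeping with the bases $(a_i,b_i)$ and their reductions $(a_i, b_i^r)$ --- is routine and already essentially contained in the proof of Lemma~\ref{lemma/existence}.
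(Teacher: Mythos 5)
Your proposal is correct and follows essentially the same route as the paper: both pass to the principal bundle $E' = E/\mathbb{Z}_N$, use the triviality of $X' \to X/\mathbb{S}^1$ and the section realizing the parallel transport $a_0 \mapsto a_1 + k b_1^r$, and identify $e(E')$ with the jump $k$ of that section across the gluing torus, concluding $e(E) = e(E')/N = k/N$. The paper implements the final clutching computation by explicitly modifying the two cross sections with a bump function supported near a small disk, which is just a hands-on version of the obstruction-theoretic degree count you invoke.
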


\begin{proof}
Consider the action of the quotient circle $\setS^1 / \mathbb Z^N$ on the quotient space $E' = E/\mathbb Z^N$. Since $E'$ is a manifold and the action is free, we have a
 principal bundle $(E', B = E/ \mathbb S^1, \Pr{'})$.
 Let  
$$U_1 \cong [0,\varepsilon] \times S^1$$ 
be a cylindrical neighborhood of $X_0/\setS^1$ in $X/\setS^1$ with $\{0\}\times S^1 \cong X_0 / \setS^1.$
Define 
$$U_2 = \overline{B \setminus U_1}.$$ 
We already know that if $X' = X / \setZ_N$ then $(X', X/ \mathbb S^1, \Pr{'})$ is a trivial circle bundle.
Observe that $E'$ is obtained from $X'$
 by identifying the boundary tori $X'_0$ and $X'_1$ via a diffeomorphism induced by the `monodromy' matrix  
 $\begin{pmatrix}
            1 & k \\
            0 & 1
           \end{pmatrix}$. 
Hence there exist cross sections $s_1 \colon U_1 \to E'$ and $s_2 \colon U_2 \to E'$ such that $s_2 = s_1$ on the boundary circle
 $\{\varepsilon\}\times S^1$  and
$s_1 = e^{i k \varphi} s_2$ on  $\{0\}\times S^1$, parametrized by an angle $\varphi$.

Let $f \colon [0,2\pi] \to [0, 1]$ be a smooth function such that $f|_{[0,\delta]} = 1$ and $f|_{[2 \delta, 2 \pi]} = 0$. Define a continuous function $h \colon [0,\varepsilon] \times S^1 \to [0,2\delta]$ by the following formula
\begin{equation*}
h(\phi,\varphi)  = \dfrac{\varepsilon - \phi}{\varepsilon} \varphi f(\varphi) .
\end{equation*} 
Let $D^2 = (0, \varepsilon) \times (\delta, 2 \pi).$
Define new cross sections $s'_1 \colon U_1 \to E'$ and $s'_2 \colon B \setminus D^2 \to E'$  as follows
\begin{equation*}
s'_1 = s_1 \cdot e^{i k h } \  \ \  \mbox{ and } \ \ \ 
s'_2 = \begin{cases}
s_2&\text{on $U_2$,}\\
s'_1&\text{otherwise.}
\end{cases}
\end{equation*}

Observe that $s_1(0\times S^1) = s_2(0 \times S^1) + k b,$ where $b$ corresponds to the $\mathbb S^1$ action. If $\delta > 0$ is small enough, then $s_1(0\times S^1)$ is homological to 
$s'_1(0\times S^1).$ Hence 
\begin{equation*}
s'_1(0\times S^1) = s'_2(0 \times S^1) + k b.
\end{equation*}
But $s'_1(\partial D^2 + 0 \times S^1) = s'_2(\partial D^2 + 0 \times S^1)$. Therefore
\begin{equation*}
s'_2(\partial D^2) = s'_1(\partial D^2) + k b.
\end{equation*} 
Thus, $e(E') = k$ and  
 \begin{equation*}
 e(E) = \frac{1}{N}e(E') = \dfrac{k}{N}.
 \end{equation*}
\end{proof}

 \begin{lemma} \label{lemma/span}
  The parallel transport group $H^0_1$ is spanned by the cycles $Na_0$ and $b_0$.
 \end{lemma}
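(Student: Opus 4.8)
The statement to prove is Lemma~\ref{lemma/span}: the parallel transport group $H^0_1$ is spanned exactly by $Na_0$ and $b_0$. By Lemma~\ref{lemma/existence} we already know $Na_0, b_0 \in H^0_1$, so $H^0_1 \supseteq \langle Na_0, b_0\rangle$. As noted in the text right before the lemma, $H^0_1$ is a subgroup of $H_1(X_0) \simeq \mathbb Z^2$ containing $b_0$, and since (by Lemma~\ref{lemma/uniqueness}) the transport is unique, $H^0_1$ must be of the form $\langle La_0, b_0\rangle$ for some $L \in \mathbb N$ dividing $N$. So the entire content is to show $L = N$, i.e. that no \emph{proper} divisor of $N$ times $a_0$ can be parallel transported.

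**Key steps.** First I would argue by contradiction: suppose $L a_0$ can be transported along $X$ with $L \mid N$, $L < N$. By uniqueness (Lemma~\ref{lemma/uniqueness}) the transport is $L a_0 \mapsto L a_1 + k' b_1$ for a unique integer $k'$, and compatibility with the transport $Na_0 \mapsto Na_1 + kb_1$ from Lemma~\ref{lemma/existence} forces $k = (N/L)k'$, so in particular $L \mid$ would have to be compatible with the Euler number computation. The cleanest route is to feed this into Lemma~\ref{lemma/eulernumber}: if $La_0$ transports to $La_1 + k'b_1$, then the closed Seifert manifold $E$ obtained by gluing $X_0$ to $X_1$ via $a_0 \mapsto a_1$, $b_0\mapsto b_1$ has a quotient structure through $\mathbb Z_L$ as well — form $E'' = E/\mathbb Z_L$. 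The transport of $La_0$ along $X$ descends, just as in the proof of Lemma~\ref{lemma/eulernumber}, to show that $(X/\mathbb Z_L, X/\mathbb S^1, \Pr'')$ would be a circle bundle whose Euler number is $k'$. But then $e(E) = e(E'')/L = k'/L$, while Lemma~\ref{lemma/eulernumber} gives $e(E) = k/N$; combined with $k = (N/L)k'$ this is consistent and yields no contradiction by itself — so I need the finer point.

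**The actual obstruction.** The real reason $L = N$ is local: near an exceptional orbit $\mathfrak p_j$ with isotropy $\mathbb Z_{n_j}$, a neighborhood in $X$ is a standard fibered solid torus, and the only cycles on its boundary torus that bound (rel boundary) a surface inside that solid torus are the linear combinations of $n_j a + (\ast) b$ — the core curve wraps $n_j$ times. Concretely: the intersection of $H^0_1$ with a neighborhood of each $\Pr(\mathfrak p_j)$ already forces $n_j \mid L$; taking the least common multiple over all $j$ forces $N \mid L$, hence $L = N$. So the plan is: (1) reduce to showing $n_j \mid L$ for each exceptional orbit; (2) localize — restrict a hypothetical relative $2$-cycle $S$ with $\partial S \cap X_0 = La_0$ to a standard fibered solid torus neighborhood $V_j$ of the exceptional orbit, as in Fig.~\ref{zet2}; (3) compute that the class of $S \cap \partial V_j$ in $H_1(\partial V_j)$ is a multiple of the meridian-type cycle that necessarily wraps $n_j$ times around the fiber direction, so that the $a$-component of $La_0$ pushed through $V_j$ must be divisible by $n_j$. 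I expect step (3) — the precise local homology bookkeeping on the fibered solid torus, i.e. verifying that a relative $2$-cycle in $V_j$ whose boundary on the outer torus has $a$-coefficient $c$ and fiber-coefficient $d$ must satisfy $n_j \mid c$ — to be the main obstacle; everything else is either already in the excerpt (Lemmas~\ref{lemma/existence}, \ref{lemma/uniqueness}, \ref{lemma/eulernumber}) or a routine lcm argument. An alternative that avoids local surgery: deduce $L = N$ purely from Lemma~\ref{lemma/eulernumber} applied to \emph{every} admissible gluing $f$ — varying $f$ changes $k(f)$ by arbitrary integers, and if $L a_0$ could always be transported then $e$ would take values in $\frac{1}{L}\mathbb Z$ for all gluings, contradicting the existence of a Seifert manifold built from $X$ with Euler number in $\frac{1}{N}\mathbb Z \setminus \frac{1}{L}\mathbb Z$ (which exists precisely because some $n_j = N$-part is present). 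I would present the local argument as primary since it is self-contained, and remark on the global one.
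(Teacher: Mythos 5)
Your primary (local) argument follows the right strategy and makes the same reduction as the paper: by Lemmas~\ref{lemma/existence} and \ref{lemma/uniqueness} one knows $H^0_1=\langle La_0,b_0\rangle$ with $L\mid N$, and everything comes down to showing $n_j\mid L$ for each exceptional orbit. The paper, however, localizes differently: it takes the preimage $E_j=\Pr^{-1}(A_j)$ of an annulus $A_j\subset X/\mathbb{S}^1$ containing $X_0/\mathbb{S}^1$ and exactly one singular point $\Pr(\mathfrak p_j)$, notes that a relative cycle transporting $La_0$ along $X$ restricts to one transporting $La_0$ along $E_j$, and then applies Lemmas~\ref{lemma/existence}, \ref{lemma/uniqueness} and \ref{lemma/eulernumber} to the sub-Seifert manifold $E_j$; comparing the unique transport of $La_0$ with that of $Na_0$ yields a divisibility relation which, together with the coprimality of $n_j$ with the numerator of the fractional part of $e(E_j)$, forces $n_j\mid L$. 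You localize one step further, to a fibered solid torus $V_j$ around the exceptional orbit, and use that $\partial_*H_2(V_j,\partial V_j)=\ker\bigl(H_1(\partial V_j)\to H_1(V_j)\bigr)$ is generated by the meridian $n_j q+\beta_j h$. That fact is true and standard, and your route actually needs slightly less than the paper's (only the factor $n_j$ in the meridian, not the coprimality of $\beta_j$ with $n_j$), so the step you flag as `the main obstacle' is not really an obstacle.

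Two points do need attention. First, the genuinely missing link in your sketch is the transfer of the local divisibility on $\partial V_j$ back to the coefficient $L$ on $X_0$. For this you must observe that $X\setminus\bigcup_j\mathrm{int}(V_j)$ carries a free action, hence is a trivial principal bundle over a surface $\Sigma$, and that for a relative $2$-cycle in $\Sigma\times S^1$ the section-coefficients of its boundary on the various boundary tori must all coincide, because the classes $[\partial_i\Sigma]$ satisfy the single relation $\sum_i[\partial_i\Sigma]=0$ in $H_1(\Sigma)$. Only then does ``section-coefficient on $\partial V_j$ divisible by $n_j$'' become ``$n_j\mid L$''. Second, the proposed alternative via varying the gluing $f$ fails: changing $f$ by vertical Dehn twists changes $k(f)$ by multiples of $N$, so $e(f)\bmod 1$ is independent of $f$ and is determined by the Seifert invariants of $X$; it can happen (e.g.\ two exceptional orbits of order $3$ with $\beta_1=1$, $\beta_2=2$) that $e(f)\in\mathbb{Z}$ for every gluing even though $L=N=3$, so no gluing detects the failure of transportability of $a_0$. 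Keep the local argument as the proof and discard the alternative.
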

\begin{proof}
 We have already noted that 
  $H^0_1$ is spanned by $La_0$ and $b_0$ for some $L \in \mathbb N$, which divides $N$. In order to prove the equality $L = N$
  it is sufficient to prove that for every $j$ the number $L$ is a multiple of $n_j$ (the order of the exceptional orbit
  $\mathfrak p_j$). 
  
  The image of the exceptional fiber $\mathfrak p_j$ under the projection
$\Pr \colon E \to B = E/ \setS^1$ is a single point $\Pr (\mathfrak p_i)$ on the base manifold $B$. Cutting $E$ along the torus
$X_0 \cong X_1$ results in the manifold $X$. The quotient $X / \setS^1$ is obtained 
from $B$ by cutting along an embedded circle. Consider an annulus $A_j \subset  X / \setS^1$
that contains $X_0 / \setS^1$ and exactly one singular
point $\Pr (\mathfrak p_j)$; see Fig~\ref{base}.

Clearly, the preimage $E_j = \Pr^{-1}(A_j)$ is a Seifert manifold with only one exceptional fiber. From the 
definition of the parallel transport it follows that there exists a relative cycle $S  \subset E_j$ such that
one of the connected components of $S$ is $L a_0$. In other words, $La_0$ can be parallel transported along $E_j$.

Let us identify the boundary tori of $E_j$ via an orbit preserving diffeomorphism. Then the result of the parallel transport of $La_0$ along $E_j$ 
is $l_1 a_0 + l_2 b_0$. Since the parallel transport is unique, see Lemma~\ref{lemma/uniqueness}, we have
\begin{equation} \label{feq}
Nl_1 a_0 + N l_2 b_0 = NLa_0 = LNa_0 = LN a_0 + Lm_jb_0,
\end{equation}
 where $m_j \in \mathbb Z.$
Let $e_j$ denote the Euler number of the Seifert manifold $E_j$. From Lemma~\ref{lemma/eulernumber} 
it follows that  
$$m_j/n_j = e_j \pmod 1.$$ 
In particular, $m_j$ and $n_j$ are relatively prime.  Eq.~\eqref{feq} implies $N l_2 = L m_j$. Since $n_j$ divides $N$, it also divides $L$. 
\end{proof}

\section{Discussion} \label{discussion}

In \cite{Efstathiou2017} we have shown that if the circle action is 
free outside isolated fixed points then standard monodromy can be completely determined by the weights $1{:}(\pm1)$ of the circle action at those points.
This result allowed us to consider both focus-focus and elliptic-elliptic singular points of the integral map 
and provide a unified result for standard monodromy around such points.
Moreover, it showed that the circle action is more important for determining standard monodromy 
than the precise form of the integral map $F$.

In the present paper we generalized results from \cite{Efstathiou2017} to the
setting of Seifert fibrations. Specifically, we showed that the parallel transport along the total space of such a fibration is well defined 
and is completely determined by the Euler number and the orders of the exceptional orbits. Then, we applied the obtained results to fractional monodromy in singular Lagrangian fibrations (integrable Hamiltonian systems) 
that are
invariant under an effective (Hamiltonian) circle action with isolated fixed points. 

In the case of singular Lagrangian fibrations fixed points with weights $m{:}n$ different from
 $1{:}(\pm1)$ may appear.
The existence of such weights $m{:}n$
implies the existence of points with non-trivial isotropy group $\mathbb Z_m$ or $\mathbb Z_n$.
Such points are projected to one-parameter families of critical values of $F$. These families contain essential information about 
the geometry of the singular Lagrangian fibration. However,
for standard monodromy such critical families  are `invisible' in the sense that in 
standard monodromy we only consider the regular part of the fibration and the curves $\gamma$ along which standard monodromy is defined do not cross any critical values.
In the fractional case  the curves $\gamma$ are allowed to cross critical values of $F$. Our results show that also in this fractional case
the circle action is more important for fractional monodromy
than the precise form of the integral map $F$.

\section*{Acknowledgements}

We would like to thank Prof.\ Henk W.\ Broer for his valuable comments and suggestions on an early draft of this paper. We would also like to 
thank Prof.\ Gert 
Vegter and Prof.\ Holger Waalkens for useful discussions. We are grateful to the referee for his valuable comments which led to improvements of the
original version of this paper.   K. E. was partially
supported by the Jiangsu University Natural Science Research Program (grant
13KJB110026) and by the National Natural Science Foundation of China (grant
61502132).

\bibliographystyle{amsplain}
\bibliography{library}

\end{document}